%% file: arxiv_v1.tex
\documentclass[11pt,letter]{article}

\providecommand{\keywords}[1]{
  \small
  \textbf{\textit{Keywords---}} #1
}
\usepackage{tikz}
\usepackage{comment}
\usepackage{amsfonts,amssymb}
\usepackage{graphicx}
\usepackage{textcomp}
\usepackage{mathtools}
\usepackage{subcaption}
\usepackage{multirow}
\usepackage{array}

\usepackage{amsthm}

\newtheorem{thm}{Theorem}
\newtheorem{theorem}{Theorem}

\newtheorem{lemma}[thm]{Lemma}

\newtheorem{claim}[thm]{Claim}

\theoremstyle{plain}
\newtheorem*{lemma*}{Lemma} 

\usepackage{todonotes} 

\DeclareMathOperator*{\argmax}{arg\,max}
\DeclareMathOperator*{\argmin}{arg\,min}

\newcommand{\cA}{\mathcal{A}}

\newcommand{\hY}{\widehat{Y}}

\newcommand{\hH}{\widehat{H}}
\newcommand{\hG}{\widehat{G}}

\usepackage{algorithm}
\usepackage{algpseudocodex}

\newcommand{\opt}{\mathit{opt}}
\newcommand{\OPT}{\mathit{OPT}}

\begin{document}
\title{Busy Time Minimization with Preemption, Migration, and One Resource Requirement} 

\author{Gruia C\u alinescu\thanks{
The Research Institute of the University of Bucharest, Romania. gcali9999@gmail.com https://orcid.org/0000-0002-0925-9524  Research done in part at Northwestern University and in part at Illinois Institute of Technology
}
\and
 Mozhengfu Liu\thanks{
 Northwestern University, Evanston, IL, USA.  mozhengfuliu2027@u.northwestern.edu  https://orcid.org/0000-0001-6181-9958  Supported by an Adobe award and NSF 2216970 (IDEAL).
 }
 }

\maketitle

\begin{abstract}
We study the Busy Machine Time with Preemption and Migration and One Resource Requirement problem, motivated by energy minimization in cloud data centers. Given unlimited identical-capacity machines and jobs with arrival times, deadlines, processing times, and resource requirements, we allow free preemption and migration at integer times and seek to minimize total machine busy time. The problem is NP-hard, and previous results consist of a $2$-approximation, $2$-competitive algorithm for the case of uniform heights.

We obtain a $22/9 \approx 2.445$-approximation algorithm and a $2.5$-competitive online algorithm, both running in $O(n^2 \log n)$ time.
Our methods are based on new absolute performance bounds for the First Fit Decreasing algorithm for Bin Packing, and a new generalization of Span Minimization, the Huge-Tiny Busy Time problem for which we present an exact offline algorithm and an optimal $3/2$-competitive deterministic online algorithm.
\end{abstract}
\keywords{busy time minimization, approximation algorithm, online algorithm, greedy algorithm, bin packing} 

\section{Introduction}
\label{s_intro}

We study the Busy Time Scheduling problem in the following setting.
We have access to an unlimited number of machines, all having the same capacity.
A set of jobs is given as input, each job characterized by
arrival time, processing time, and deadline, and one resource requirement that we call \emph{height}.
All jobs must be scheduled, and we allow free preemption and migration of jobs from
one machine to another at discrete time points. The sum of the heights of the jobs assigned to a machine at a given time
cannot exceed the capacity of the machine. With these constraints, we aim to minimize the 
so-called \emph{busy machine time}, which is the total time machines are used to accommodate the input jobs.
Note that if the schedule uses several machines at the same time, the usage of each of these machines is added to the busy time.
See Section \ref{s_prel} for formal definitions. We call our problem BMTPMORR, for
Busy Machine Time with Preemption and Migration and One Resource Requirement.

The BMTPMORR problem is NP-hard by a trivial reduction from Bin Packing.
Even with uniform heights, BMTPMORR is NP-hard as the construction given by \cite{CaoFLMRU22,cao_et_al:LIPIcs.ISAAC.2022.36} for 
Active Time also works for Busy Time (this can be checked with some effort).
With unlimited machine capacity, at any given time, at most one machine is used and the goal
becomes minimizing the so-called \emph{span}. Span minimization can be done in polynomial-time 
by the ``greedy'' algorithm from Theorem 6 of
\cite{CKM17}. Based on this algorithm, \cite{CKM17} (Theorem 7) 
obtains a rather simple $2$-approximation algorithm for the variant
of BMTPMORR with uniform heights, and this is the best known so far.

We are also interested in the online setting (see \cite{BEY98} for an introduction to online algorithms),
where a job $j$ becomes known to the algorithm at its arrival time $a(j)$.  It is an easy observation that
the Span Minimization algorithm of \cite{CKM17} can be easily adapted to work in the online setting;
this will also follow from our algorithm in Section \ref{s_hp}. In other words, Span Minimization with preemption
has an exact online algorithm. The $2$-approximation algorithm for uniform heights of \cite{CKM17}  
also works in the online setting and gives a $2$-competitive algorithm for BMTPMORR with uniform heights.

Approximation and competitive ratio of 3 for BMTPMORR are straightforward:
one uses an online version of the algorithm from Theorem 6
of \cite{CKM17} for Span Minimization,
and then uses AnyFit \cite{coffman1997approximation}
as a bin-packing procedure for the jobs assigned to every timeslot
(as described here, this can be a pseudopolynomial-time algorithm,
but with a bit of effort similar to Appendix \ref{s_pt}
it can be transformed into a polynomial-time algorithm).
We give the proof of this ratio of 3 in Appendix \ref{s_3}.
We keep this approach of separating temporal scheduling from
machine packing, but use a better algorithm for packing
and a  new, custom algorithm for temporal scheduling.

For Bin Packing, one can do much better than AnyFit; however there is the challenge 
that our algorithms  may not assign to one
timeslot the same jobs as an optimum schedule assigns,
so comparing directly to the optimum of one timeslot did not help our analysis.
We found First Fit Decreasing (FFD)\footnote{FFD sorts the items in non-increasing order of size and then places each item,
in that order, into the first (lowest-indexed) bin that has enough remaining capacity to hold it,
opening a new bin only when none of the existing bins fit.
}
to be the best suited Bin Packing algorithm for our proof method;
in order to use FFD, we prove new upper bounds for it in Section \ref{s_ffd}.

In order to improve these bounds, we introduce a new problem, HTBTPM (Huge-Tiny Busy Time with Preemption and Migration),
that generalizes the Span Minimization problem. HTBTPM has two types of jobs: those with huge resource requirement,
so no two huge jobs can be scheduled on the same machine, and those with tiny resource requirements,
which can all fit on any machine, even together with a huge job. HTBTPM may be of independent interest. 
There is a common method of separating jobs with heights greater than $\epsilon$ (called large jobs) from jobs with heights smaller than or equal to $\epsilon$ (called small jobs) and hereafter the two types of jobs are packed differently. See \cite[Chapter~9]{Vazirani2001_bin_packing} for an example. 
Note that HTBTPM is not a method, but a problem.
We are able to solve exactly HTBTPM using a greedy algorithm
that generalizes the algorithm of \cite{CKM17} for Span Minimization.
While the algorithm itself is rather simple,
its proof of correctness turns out to require involved techniques based on a matching lower bound,
as seen in Section \ref{s_hp}.
Minor changes in the algorithm allow us to obtain a $1.5$-competitive online algorithm for HTBTPM; 
this is best possible (Theorem  \ref{t_32} in Section \ref{s_prel}).  The proof of the competitive ratio uses the same lower bound
and much of the machinery used in analyzing the exact greedy algorithm.

The HTBTPM algorithms 
determine the temporal assignment, while FFD constructs a feasible packing under the original machine-capacity constraints. 
Combined with our bounds on FFD, the HTBTPM upper bounds give us a $22/9 < 2.445$ ratio for the approximation algorithm,
and a $2.5$ competitive ratio for the online algorithm. Both algorithms can be made to run in time $O(n^2 \log n)$
(Appendix \ref{s_pt}).
Intriguingly, our approximation ratio is not much better than our competitive ratio; but for uniform heights too,
the best known ratio of 2 holds for both settings \cite{CKM17}.
We also present in Appendix \ref{s_lb}
a lower bound of $1.5$ for online deterministic
algorithms  in the uniform heights case. 
Here we assume that the machine capacity is 1 and every
job has height $1/g$, and the above bound holds for
$g \geq 4$.

\subsection{Related Work}
With preemption and migration, 
\cite{sarpatwar2023preemptive} study minimizing the number of identical machines that can accommodate
a given set of jobs,
in which jobs require $d>1$ resources  and have $d$ requirements. 
The machines have capacity constraints for each of these resources.
Like in this work, the jobs are flexible and have variable
processing times. Note that their objective is not  busy time, since they do not
add over time the number of machines used in each timeslot,
but instead take the maximum, over time, of
the number of machines used in each timeslot.
They obtain a $O(\log{d}\log^*{T})$-approximation, where $T$ is the maximum deadline of any job.

Driven by demands of cloud computing where physical machines (corresponding to machines) in large data centers are rented for hosting virtual machines (corresponding to jobs), minimizing the total energy consumption and maximizing efficiency of the physical machines \cite{MASDARI2016106} received much attention.
In the online setting of rigid job scheduling, i.e. each rigid job is processed fully from its arrival to deadline, Mellou \emph{et al.} \cite{Mellou2024migration} investigate how migration (preemption is not needed here as jobs are rigid) helps break the barrier of the lower bound for any non-preemptive scheduling algorithm.
More related works on Busy Machine Time can be found in the survey \cite{CL19}.

If preemption is not allowed, Busy Time Minimization with One Resource Requirement was studied by Khandekar \emph{et al.} \cite{KSST15} 
and they obtain a $5$-approximation algorithm. When the jobs have the same height, Chang et al. \cite{CKM17} show that
the algorithm of \cite{KSST15} is a $4$-approximation, and then gives a $3$-approximation algorithm.

For migration, Kalyanasundaram and Pruhs \cite{KALYANASUNDARAM20012} study the difference between the scheduling with migration and the scheduling without migration on the number of processors used for completing a given set of jobs. 
For online preemptive scheduling, Chen \emph{et al.} \cite{ChenEMSS20} study the throughput maximization problem where a commitment must be made before each job's deadline and each job has an $\epsilon$ slackness. 

For classic online Bin Packing problem, \cite{Angelopoulos2018} studies how advice about the input helps to improve the competitiveness. 
\cite{BDESV18} studies the colored bin packing where items in the same color cannot be placed next to each other in the same bin. 
In Demand Strip Packing problem where items have different heights and widths and a strip of a fixed width is given, all the items are packed within the given strip and the objective is to minimize the peak height. Jansen \emph{et al.} \cite{pseudo_DSP} closed the problem in pseudo-polynomial time. Eberle \emph{et al.} \cite{polynomial_DSP} 
closed the problem in polynomial time.

\section{Preliminaries and Lower Bounds} 
\label{s_prel}

Job $j$ has an interval 
$W_j$ (called the {\em window} of $j$) where it can be scheduled. 
The left point of $W_j$ is the arrival time $a(j) $ of job $j$, and the right endpoint is the deadline $d(j)$ of job $j$.
Job $j$ has processing time $p(j) \leq d(j) - a(j)$ and ``height'' $\sigma(j)$. The numbers $a(j),d(j),p(j)$ are all integers.

Without loss of generality, we assume that all
machines have capacity $1$. This work assumes that a machine can start and end running at integer units of time.
Since this work assumes that preemption and migration of jobs is possible and free,  by ``splitting'' a machine's run at
an integer unit of time, we can assume without loss of generality that a machine executes
the same set of jobs for all the duration 
the machine is active. Note that the capacity constraint dictates
that the sum of the heights of this set of jobs does not exceed  $1$.

Job $j$ must be assigned to one or more machines so that the total running time of these machines is $p(j)$,
all these machines are active only inside $W_j$,
and no two of these machines are active at the same time.
The total time (sum of all durations) that the machines run is the {\bf busy time} that this work aims to minimize.

A timeslot is an interval $[i,i+1]$ where $i$ is an integer.
Given an algorithm, we say that a job \emph{opens} a timeslot if it is the first job scheduled  by the algorithm in that timeslot.

A job $j$ is a \emph{unit} job if $p(j) = 1$.
Scheduling a set of unit jobs $J$ gives a function $f:J \rightarrow \{ [i,i+1] \; | \; i \in [0,T-1] \}$
(as above $T = \max_j d(j)$), where $f(j) \subseteq W_j$. 
We say that a schedule of set of unit jobs $J$ \emph{covers} a set of timeslots $S$ if
every timeslot in $S$ has a job of $J$ scheduled in it, or in  more mathematical terms,
$S \subseteq  \{ [i,i+1] \; | \; \exists j \in J \; f(j) = [i,i+1] \}$,
where $f$ is the function associated with the schedule.
We say that a set of unit jobs $J$ \emph{can cover} a set of timeslots $S$ if there exists a schedule of $J$
that covers $S$.

Let $v = \sum_j \sigma(j) p(j)$ be the ``volume''. 
Then one lower bound is:
\begin{equation}
    \opt \geq \lceil v \rceil
    \label{lb_volume}
\end{equation} 
Indeed, we can assume (by splitting) that any machine is used for exactly one timeslot.
Take any machine (busy for one timeslot) $M$.
The set of jobs processed by the machine $M$ has the total height at most $1$
and each job in it is processed for one timeslot by the machine $M$.
Therefore, for any feasible (optimal) scheduling, one unit of busy time corresponds to a
volume $\sum_{j: j\text{ processed in } M} \sigma(j) \leq 1$.
Summing up for all machines, the total busy time due to the scheduling 
is the total number of machines and is greater than or equal to
$\sum_M \sum_{j: j\text{ processed in } M} \sigma(j) = \sum_j \sum_{M: M \text{ processes }j} \sigma(j) = \sum_j \sigma(j)p(j)$. 
Due to the integrality, lower bound (\ref{lb_volume}) follows. 

Consider $Q$,  a set of timeslots of minimum size, so that we can
schedule all the jobs in $Q$ (assuming unlimited machine capacity). 
Note the lower bound:
\begin{equation} 
\opt \geq |Q|
\label{lb_span}
\end{equation}
Indeed,
for any feasible scheduling, the input jobs are processed during at least $|Q|$
timeslots by the definition of $Q$. 
Since for each of the timeslots some job is processed by some machine,
there are at least $|Q|$ units of busy time in the scheduling. Lower bound (\ref{lb_span}) follows. 

Let $J_H$  be the set of {\em huge} jobs of height
strictly bigger than  $1/2$. 
Let $J_B$  be the set of {\em big} 
jobs of height strictly bigger than $1/3$ (so $J_H \subseteq J_B$). 

Now we define our third lower bound:
\begin{equation}
     h := \sum_{j \in J_H} p(j)   \leq \opt,
    \label{lb_tall_jobs}
\end{equation}
as each of the jobs $j$ requires $p(j)$
distinct machines (one machine cannot accommodate two of these jobs).

We will make use of the following lower bound. Let $b:= \sum_{j \in  J_B \setminus J_H} p(j)$ 
\begin{equation}
\label{e_JB}
    \opt \geq  \frac{h+b}{2}
\end{equation}
as each machine processes at most two distinct jobs from $J_B$ in one timeslot. 

Given an BMTPMORR instance, we can construct a  HTBTPM instance 
where all the non-huge jobs are treated as tiny.  The optimum of the new instance
cannot exceed the optimum of the original instance, and this gives us another lower bound on $\opt$.

Suppose a job $j$ has already been assigned to $l(j) < p(j)$ timeslots. We call $d(j) - (p(j) - l(j))$ 
the \emph{starting deadline} of $j$, as $j$ must be assigned to all the timeslots between 
 $d(j) - (p(j) - l(j))$  and $d(j)$ if no further work is performed before that time. 

\begin{theorem}
    \label{t_32}
    The competitiveness of any (deterministic) online algorithm for the Huge-Tiny setting is at least $3/2$. 
\end{theorem}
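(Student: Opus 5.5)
The plan is to build an adaptive adversary. At each integer time it looks at which timeslot the online algorithm has just committed to, and then releases new jobs whose windows start at that time. Busy time in the Huge-Tiny setting is simple to count: each huge job uses its own machine in every timeslot where it runs, while all tiny jobs in a timeslot can ride on any open machine, including one already running a huge job. So the cost of a schedule is $\sum_t \max(1,\#\text{huge jobs at } t)$ over the timeslots that are in use.

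I would first settle which gadget is forced. A single tiny job $A$ with window $[0,2]$ and $p(A)=1$ already punishes an algorithm that runs early. If $A$ runs in $[0,1]$, the adversary releases at time $1$ a huge job $H$ with window $[1,2]$ and $p(H)=1$. The algorithm pays $2$, while $\opt=1$ by running $A$ alongside $H$. The difficulty is punishing an algorithm that waits. I checked small candidates: adding a huge job with window $[0,2]$ at time $0$, and releasing one or two huge jobs with window $[1,2]$ at time $1$. In each of these the waiting algorithm matches $\opt$, because tiny work left for later can always ride on huge jobs that show up later. So the construction has to make waiting expensive in a different way. Tiny work that keeps being postponed should pile up with a tight starting deadline and be forced into timeslots where nothing huge is running. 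The starting-deadline notion defined just above the statement is exactly what tracks this.

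Concretely, I would use a repeated phase construction over a long horizon. At time $0$, release a tiny job with a long window and large processing time, so that the algorithm must choose at every step between using a timeslot now and waiting. While the algorithm waits, the adversary releases nothing, which pushes the job's starting deadline closer. Once the algorithm is forced to run tiny work alone, the adversary releases huge jobs with short windows in exactly the timeslots the algorithm has just used up. $\opt$ would have put the tiny work on those huge machines for free, so it pays roughly only for the huge jobs. The algorithm pays for its solo tiny timeslots and for the huge jobs. I expect the resulting accounting to look like $3$ units against $2$ per phase, with phases chained so that lower-order terms vanish. That gives a ratio tending to $3/2$, which is enough for a competitiveness lower bound of $3/2$.

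The main obstacle is the step above: designing the adversary's response to a waiting algorithm so that waiting loses a constant fraction. Running early is easy to punish, as shown. For waiting, the first thing I would try is having the tiny job's window end before any huge jobs appear in the waiting branch, while in the running-early branch huge jobs keep arriving that the tiny work could have used. I would then tune the parameters so that the algorithm's best mixed strategy, running a fraction of the tiny work early, is forced to ratio exactly $3/2$. Finally, I would check every branch for both the algorithm's cost and an explicit $\opt$ schedule.
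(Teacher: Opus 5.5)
Your proposal has a genuine gap. It never constructs the adversary's response to an algorithm that waits, which you correctly identify as the crux. The ``$3$ units against $2$ per phase'' accounting is only anticipated, not derived.

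The concrete ideas you offer for the waiting branch also fail:
\begin{itemize}
\item If the waiting branch contains no huge job overlapping the tiny job's window, then $\opt$ must also run that tiny work in slots without huge jobs, so waiting is not penalized. With tiny jobs only, the problem is Span Minimization, where the lazy rule (open a slot only at a starting deadline) is already optimal online.
\item The adversary cannot release huge jobs ``in exactly the timeslots the algorithm has just used up.'' A job is revealed at its arrival time, so once the algorithm has committed to $[t,t+1]$, new jobs can only occupy later slots.
\end{itemize}
A long horizon with chained phases is also unnecessary.

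The missing idea is to penalize waiting through a \emph{huge} job. Postponing a flexible huge job lets the adversary make it collide with a rigid huge job, and a tiny job of processing time $2$ then spills into a tiny-only slot. The paper releases at time $0$ a huge job with window $[0,2]$, $p=1$, and a tiny job with window $[0,3]$, $p=2$.
\begin{itemize}
\item If the algorithm leaves $[0,1]$ closed, the adversary releases a rigid huge job on $[1,2]$. The algorithm needs two machines in $[1,2]$. Since the tiny job can now use only $[1,2]$ and $[2,3]$, it needs one more unit in $[2,3]$, for a total of $3$. Meanwhile $\opt=2$: one huge job in each of $[0,1]$ and $[1,2]$, with the tiny job riding along.
\item If the algorithm opens $[0,1]$, the adversary releases a rigid tiny job on $[1,3]$. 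This forces $[1,2]$ and $[2,3]$ open, for a total of $3$. Meanwhile $\opt=2$: the huge job in $[1,2]$, and all tiny work in $[1,2]$ and $[2,3]$.
\end{itemize}

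Your small-case check already contained the right huge-job gadget. It failed only because your tiny job had $p=1$ and window $[0,2]$, so it could ride on the late huge job. Your early-branch response (a huge job on $[1,2]$) also stops working once the flexible huge job is present: the algorithm simply places that job in $[0,1]$ and pays $2$, matching $\opt$. The rigid tiny job on $[1,3]$ is what punishes running early here. Both branches then give ratio exactly $3/2$, with no limiting argument needed.
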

\begin{proof}
The adversary releases a huge job at time $0$, deadline $2$, and processing time $1$.
The adversary releases a tiny job at time $0$, deadline $3$, and processing time $2$.
If the algorithm chooses to open a bin at timeslot $[0,1]$, then the adversary releases
a tiny rigid job at time $1$ with processing time $2$.
The algorithm incurs a cost of at least $3$,
while a solution of cost $2$ exists - open bins at timeslots $[1,2]$ and $[2,3]$.
If the algorithm does not open the bin at timeslot $[0,1]$, then the adversary releases
a huge rigid job with processing length $1$ at time $1$. The algorithm incurs a cost of
at least $3$, while
a solution of cost $2$ exists - open one bin at each of the timeslots $[0,1]$ and $[1,2]$.
\end{proof}

Note that our work from Section \ref{s_hp} implies that this lower bound is tight.

\section{Some FFD properties}
\label{s_ffd}

We consider Bin Packing, and give bounds on the performance of FFD 
(analyzed with respect to optimum in \cite{johnson1973near,BAKER198549,DLHT13})
in terms of $v',b', h'$ where $v'$ is the volume, 
$h'$ is the number of huge jobs, and $b'$ is the number of big-but-not-huge jobs.
FFD was shown to have approximation ratio (absolute, not asymptotic) of $3/2$ in \cite{simchilevi1994new} 
and the proof of this ratio given in Theorem 18.7 of  \cite{KV12} gave us some useful ideas that we use below.

\begin{lemma}
\label{lem: ffd}
    Consider a Bin Packing instance where $h'$ denotes the number of huge items, $b'$ is the number of items that are big but not huge,
    and $v'$ is the total volume.
    Let $a'$ be the output of FFD.
    If $h' = 0$, then $a' \leq 1 + \frac{1}{18} b' + \frac{4}{3} v'$.  Also if $h' =0$, then $a' \leq 1 + \frac32 v'$.
    If $h' > 0$, then $a' \leq h' +  \frac{1}{18} b' + \frac{4}{3} v'$. Also, if $h' >0$, then $a' \leq \frac12 h' + \frac32 v'$.
\end{lemma}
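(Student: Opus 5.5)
The plan is to follow the classical "last bin" argument for FFD (as in the $3/2$ proof of Theorem 18.7 in \cite{KV12}). We classify items as huge ($>1/2$), big-not-huge ($(1/3,1/2]$) and small ($\le 1/3$), and look at the item $x$ that opens the last bin $a'$. Since FFD handles items in non-increasing order, every item placed before $x$ has size at least $\sigma(x)$. Moreover, every earlier bin had load $>1-\sigma(x)$ at the moment $x$ was rejected, and loads only grow afterwards. The proof splits into the case where $x$ is big or huge and the case where $x$ is small.

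\emph{Case 1: $x$ is big or huge.} Then only big items have been placed when $x$ is packed, so every bin is opened by a big item.

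If $x$ is huge, every bin holds a distinct huge item, so $a'=h'$. This is at most $h'+\frac1{18}b'+\frac43v'$, and $v'>h'/2$ gives $a'\le \frac12h'+\frac32 v'$.

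If $x$ is big-not-huge, no bin can contain three big items. So FFD places each huge item in its own bin, possibly with one big-not-huge partner, and pairs the remaining big-not-huge items. Hence $a'\le h'+\lceil b''/2\rceil$, where $b''\le b'$ counts the big-not-huge items left after the huge ones received partners. Using $v'> \frac12 h' + \frac13 b'$ one checks
\[
\frac{b'}{2}\le \frac{b'}{18}+\frac43\cdot\frac{b'}{3}
\qquad\text{and}\qquad
\frac{b'+1}{2}\le \frac12+\frac32 v' .
\]
This yields all four inequalities. The rounding term is absorbed by the additive $1$ when $h'=0$, and by the slack $h'-\frac12h'$ (or the surplus of the huge items' volume) when $h'>0$.

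\emph{Case 2: $x$ is small, $\sigma(x)\le 1/3$.} Every earlier bin has load $>2/3$. This immediately gives $a'-1<\frac32 v'$ and, when $h'>0$, the bound $\frac12h'+\frac32 v'$ (which is weaker than $1+\frac32 v'$ once $h'\ge 2$; the case $h'=1$ needs the extra $1/2$ from the huge item's own bin or from $x$'s contribution). For the $\frac43$ bounds we want each earlier bin to have load $\ge 3/4$, up to deficits charged to big and huge items.

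If $\sigma(x)\le 1/4$, all earlier bins have load $>3/4$, so $a'-1< \frac43 v'$ and we are done. If $\sigma(x)\in(1/4,1/3]$, every earlier item exceeds $1/4$. A bin with load $<3/4$ therefore contains at most two items, and since its load exceeds $2/3$ it must contain a big item. The deficit $1-\frac43\,\mathrm{load}$ of such a bin is below $1/9$.

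\emph{Main obstacle.} The hard step is charging these deficits at rate $\le 1/18$ per big-not-huge item, and at rate $\le 1$ per huge item, where the slack $h'$ instead of $1$ is available. Bins containing a huge item are charged to that huge item, which pays for the whole bin. A bin with two big-not-huge items charges $1/18$ to each. The delicate configuration is a bin with one big-not-huge item $b$ and one small item $y>1/4$, such as $0.4+0.3$, whose deficit exceeds $1/18$. For this configuration I would first try a pairing argument: FFD would have put $y$ into an earlier bin holding a single big-not-huge item if it fit, so there are few such bins. Together with the surplus $\frac43\sigma(x)-1>-\frac23$ hidden in the last bin, and the last bin's own contents (which are at least $x>1/4$), this should cover the remaining deficit. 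If that fails, I would refine by bounding $\sigma(b)+\sigma(y)>1-\sigma(x)\ge 1-\sigma(y)$ and optimizing the worst case numerically to confirm the constant $1/18$.
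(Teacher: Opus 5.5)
Your framework (focus on the item $x$ that opens the last bin, split on $\sigma(x)>1/3$, $\sigma(x)\le 1/4$ and $\sigma(x)\in(1/4,1/3]$, and charge per-bin deficits) is essentially the paper's. Case~1 and the two $\frac32$ bounds go through along the lines you sketch; for $h'=1$ you do need the extra argument you hint at.

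The gap is the step you yourself label the main obstacle. You never establish the $\frac43$ bounds when $\sigma(x)\in(1/4,1/3]$, because the deficit of a bin holding one big-not-huge item $b$ and one small item $y$ is left uncovered. Your proposed pairing argument looks at the wrong item. Knowing that $y$ did not fit into an earlier bin does not limit the number of such bins, since an earlier bin of the same type may already have received its own small item before $y$ is processed. And ``optimizing numerically'' is not a proof.

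The missing fact concerns $b$. A bin without a huge item that holds a single big-not-huge item has load at most $1/2$, so it accepts any later big-not-huge item. Moreover, every earlier bin of this kind already holds two big items (load $>2/3$). Hence FFD pairs all big-not-huge items that are not placed with a huge item. At most one bin contains exactly one big-not-huge item and no huge item; this is the paper's bin $\beta+1$, resp.\ $h'+\beta+1$. So your delicate configuration occurs at most once, with deficit below $1/9$.

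This single deficit is easily covered. Let the additive constant ($1$, or $h'\ge 1$) pay for the last bin, whose deficit $1-\frac43\sigma(x)$ is below $\frac23$; this is what your ``surplus $>-\frac23$'' really is. What remains is slack $\frac43\sigma(x)>\frac13$, which is more than enough. Equivalently, as the paper does, pair that bin with the last bin to get combined volume $>(1-\sigma(x))+\sigma(x)=1$.

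The rest of your charging then closes. Bins with two big-not-huge items cost at most $\frac1{18}$ per item. Lone-huge bins (deficit $<\frac19$ each) are paid from $h'$, since $h'-\frac{h'}9\ge\frac89>\frac23+\frac19$.

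One further repair: ``a bin with load $<3/4$ has at most two items'' holds only for loads formed by items up to $x$, since later items may be smaller than $1/4$. You should first delete all items after $x$. This leaves $a'$ unchanged and does not increase $h',b',v'$, and it is the paper's opening reduction.
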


\begin{proof}
    If the last item is not put in the last bin, remove it from the instance; this does not change $a'$ and cannot increase $h'$, $b'$, or $v'$, so making the proof for the smaller instance is enough.
    On top of it, if the last item (must be in the last bin) is not the first item packed into the last bin, remove it from the instance by the same argument.
    After all the removals, eventually, the last bin contains only one item which is also the last item in the instance.
    Moreover, no huge item is ever removed as it always is the
    first item scheduled in a bin.

    Let us first consider the case $h' =0$.
    As FFD starts filling up the bins (also see Figure \ref{fig:ffd_h0}), it will have $\beta = \lfloor b'/2 \rfloor$ bins with  two big jobs.
    Then, maybe, one more big job in another bin. The algorithm finishes after using another $\gamma + 1$ bins,
    so $a'  = \beta + \gamma + 1$, where
    $\gamma \geq 0$ since the last bin only has one item.

    Let the last item (which is in the last bin) have size $l$.
    The first $\beta + \gamma$ bins all have volume
    more than $1 - l$, since this last item did not fit there.
    In a first case, $l \leq \frac14$.
    Then $v' \geq \frac34 (\beta + \gamma)$ and the lemma follows.

    In the other case, $\frac14 \leq l$.
    First let us consider the subcase $\gamma$ is  $0$.
    As each of the first $\beta$ bins has (at least) two big jobs,  $v' \geq \frac23 \beta$,
    and also using $b' \geq 2 \beta$, the lemma follows from algebraic manipulations.
    In a second and last subcase, $\gamma > 0$.  Note that $l \leq \frac13$
    as at most one big job is not in the bins $1, 2, \ldots, \beta$.
    Bins $\beta +2, \beta + 3, \ldots, \beta + \gamma$ each have
    at least three jobs,
    and so  each has volume at least $3l$.
    Bin $\beta + 1$ has volume more than $1-l$. The total volume in the first $\beta$ bins
    is at least $\frac23 \beta$,
    and the volume of the remaining $\gamma + 1$ bins is at least $1 + (\gamma -1) 3 l$.
    Thus $v' \geq \frac23 \beta + 3 l \gamma$.
    Then $\frac43 v' + \frac{1}{18}b' \geq \frac89 \beta + 4l \gamma + \frac19 \beta \geq \beta + \gamma$
    and the first inequality of the lemma follows.

    Again, we consider $h' = 0$. Notice that when $a'=1$  the inequality $a' \leq 1 + \frac32 v'$ holds. Assume now  $a' > 1$.
    If the last bin has a job with size smaller than $\frac13$, then each of the first $a'-1$ bins has volume at least $\frac23$,
    or else the last item would fit there. If the last bin has a job with size at least $\frac13$, then all jobs have size
    at least $\frac13$, and since there are no huge jobs, each of the first $a'-1$ bins has
    at least two jobs of size at least $\frac13$, and thus volume at least $\frac23$.
    So in all cases $v' \geq \frac23 (a'-1)$  and thus $1 + \frac32 v' \geq a'$.

    Now we consider the case $h' > 0$. FFD works as follows (also see Figure \ref{fig:ffd_hplus}): First $h'$ bins get one huge job each.
    The $b'$ big-but-not-huge jobs come next. Some may fit in the first $h'$ bins. Some may not, and will go two by two in
    another $\beta$ bins.
    At most one big-but-not-huge job will go in bins further than $h' + \beta$.
    Small jobs follow next, some may fit in the first
    $h' + \beta$ bins, and the others will go in newer bins, resulting in $a' = h' + \beta + \gamma + 1$ bins, where $\gamma = -1$ is possible.
    Let $\hat{b}$ be the big jobs in bins $h' + 1, \ldots, h' + \beta + 1$ (these $\hat{b}$ jobs are big-and-not-huge).
    Let the last item (which is in the last bin) have size $l$.
    The first $h' + \beta + \gamma$ bins all have volume
    more than $1 - l$, since this last item did not fit there.
    In a first case, $l \leq \frac14$. We have $\gamma\geq 0$.
    Then $v' \geq \frac34 (h' + \beta + \gamma)$ and
    therefore $h' + \frac43 v' \geq h' + h' + \beta + \gamma \geq a'$ as $h' \geq 1$.

    In the other case, $\frac14 \leq l$.  First let us consider the subcase $\gamma$ is $-1$ or $0$.
    As each of the  bins $h'+1, \ldots, h' + \beta$ each has (at least) two big-but-not-huge jobs,  each has volume at least $\frac23$.
    The first $h'$ bins each has volume at least $\frac12$.  If $\gamma = -1$,
    then $\beta = 0$ (only one item on the last bin) and the lemma follows.
    If $\gamma=0$,
    the combined volume of the first and last bin is at least 1.
    We conclude $v' \geq 1 + \frac12 (h'-1) + \frac23 (\beta + \gamma)$.
    Using $b' \geq \hat{b} \geq  2 \beta$, the lemma follows from 
    $h' + \frac43 v' + \frac{1}{18}b' \geq 
    h' + \frac23h' + \frac23 + \frac89 (\beta + \gamma) + \frac19 \beta 
    \geq h' + \beta + \gamma + \frac23h' + \frac23 - \frac19 \gamma 
    \geq h' + \beta + \gamma + 1$.

    In a second and last subcase, $\gamma > 0$.  Note that $l \leq \frac13$
    as at most one big-but-not-huge job is in the bins $h' + \beta +1, \ldots, h' + \beta + \gamma + 1$.
    Bins $h' + \beta +2, h' + \beta + 3, \ldots, h' + \beta + \gamma$ each have at least three jobs,
    and so  each has volume at least $\frac34$. The first $h' + \beta$ bins each have volume at  least $\frac23$
    since the last job did not fit there. Bin $h' + \beta + 1$ and the last bin combined have volume at least $1$.
    Thus $v' \geq \frac23 (h' + \beta) + \frac34 (\gamma - 1) + 1$. Again, using $b' \geq 2 \beta$  and $h' \geq 1$,
    the inequality $a'  = h' + \beta + \gamma + 1 \leq h' + \frac{1}{18} b' + \frac43 v'$ follows from algebraic manipulation.

    For the last inequality of the lemma, we look at the same scenario.
    Clearly, the case $a' = 1$ holds, so it remains to consider the case $a'\geq 2$.
    We have $v' \geq (h' + \beta + \gamma) (1-l) + l$.
    If $l \leq \frac13$, we get $v' \geq 1 + \frac23 (h' + \beta + \gamma -1)$
    and therefore
    $\frac12 h' + \frac32 v' \geq \frac12 h' + \frac32 + (h' + \beta + \gamma - 1) \geq \frac12 + \frac32 + (h' + \beta + \gamma - 1)
    = (h' + \beta + \gamma  + 1) = a'$.
    If $l > \frac13$ (the last job is big), then all the jobs are big and $\gamma$ is $-1$ or $0$.
    Moreover, each of the bins $h'+1, h'+2, \ldots, h' + \beta$ has two big jobs and volume
    at least $\frac23$ each. Out of the first $h'$ bins, all have volume more than $1-l$ and at least $\frac12$.
    Then, we have a total volume of at least $\frac23 \beta + \frac12 (h'+\gamma-1) + (1-l) + l = \frac12 + \frac12 h' + \frac12 \gamma + \frac23 \beta$ where $\gamma = -1$ only when $h' = a'\geq 2$.
    Therefore
    $\frac12 h' + \frac32 v' \geq \frac12 h' + \frac34 + \frac34 h' + \frac34 \gamma + \beta = h' + \beta + \frac34 \gamma + \frac14 h' + \frac34 \geq h' + \beta + \gamma + 1$.
    In both cases, the inequality follows.
\end{proof}

\begin{figure}
\centering
\begin{subfigure}{.5\textwidth}
  \centering
  \includegraphics[width=.7\linewidth]{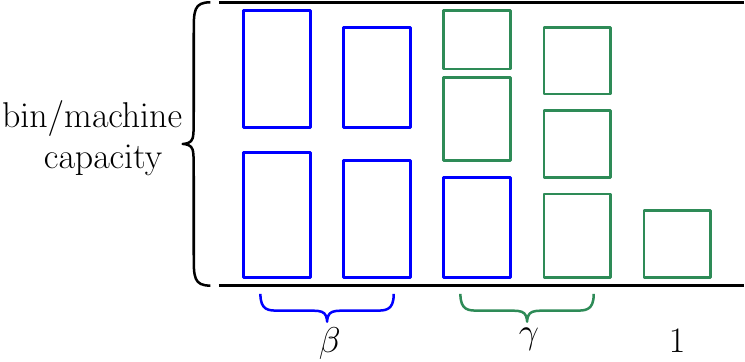}
  \caption{case $h' = 0$}
  \label{fig:ffd_h0}
\end{subfigure}%
\begin{subfigure}{.5\textwidth}
  \centering
  \includegraphics[width=\linewidth]{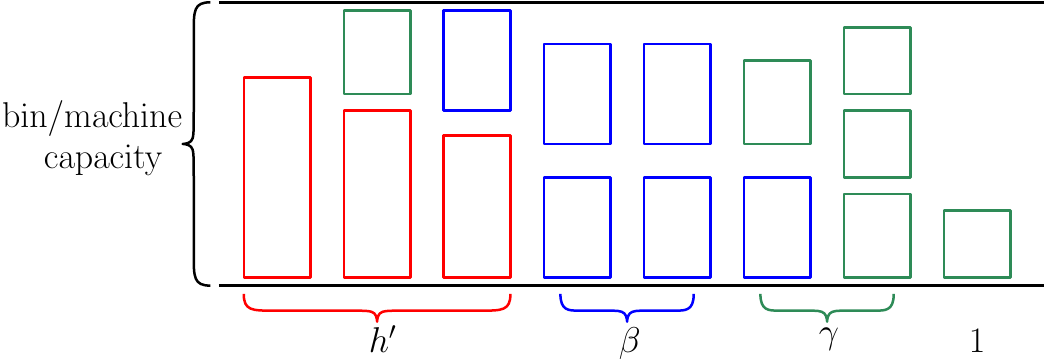}
  \caption{case $h' > 0$}
  \label{fig:ffd_hplus}
\end{subfigure}
\caption{FFD: each rectangle represents an item/job, and each column of rectangles represents the packing of a bin, and the bins are packed from left to right. Red, blue, and green items are huge, big-but-not-huge ones, and small jobs respectively. }
\label{fig:test}
\end{figure}

\section{Exact and Online Algorithms for the Huge-Tiny Case}
\label{s_hp}

Our exact algorithm $G$ and online algorithm $\hG$ for the Huge-Tiny case are very similar.
The two greedy algorithms are somewhat simple and natural, but the proofs of correctness and competitive ratio respectively
are fairly involved. In order to make the analysis a bit easier we make two simplifying assumptions in this section.
These assumptions will be removed in the Appendix \ref{s_pt}.

The first assumption is that we proceed timeslot by timeslot (and as a result the algorithm does not run in polynomial time).
In fact, as shown in the Appendix \ref{s_pt}, there are a polynomial number of certain ``events'' (such as the arrival of job or finishing all the
processing of a job) that trigger computation and all the other timeslots can be ignored.

The second  assumption is that
we split huge jobs into unit jobs; 
a huge job $j$ is \textit{split} into unit jobs $j_1,j_2,\ldots,j_{p(j)}$ such that $W_{j_i} = W_{j}$ for $i = 1,2,\ldots,p(j)$.
The basic idea on how to remove this assumption in an algorithm is as follows: 
the only danger when splitting a huge job into unit jobs is that two unit jobs
obtained from the same original huge job
are scheduled in the same timeslot.
In the offline setting, we can simply reschedule one of the two unit jobs
in a timeslot (guaranteed to exist from $p(j) \leq d(j) - a(j)$) 
where no unit jobs obtained from this original huge job are scheduled. 
This does not increase the objective of the schedule.
In the online setting, we would keep track of how many of the unit jobs coming from an original
huge job have been scheduled so far, and when the original job reaches its starting deadline:
this would create a batch as defined below and we could carefully schedule the unit jobs
in different timeslots.
A formal way of doing this appears in Appendix \ref{s_pt}, where both algorithms are shown
to each have a variant that does the same schedule and works in polynomial  time. 

After the splitting, $h$ is the total number of unit huge jobs.
This is the same $h$ as in Lower Bound (\ref{lb_tall_jobs}).
For each open timeslot $[t,t+1]$, if there is no huge job scheduled within $[t,t+1]$, then only one machine is used during $[t,t+1]$. 
If there is some huge job scheduled within $[t,t+1]$, the number of machines used is exactly the number of huge jobs scheduled within $[t,t+1]$. 
It follows that the total busy time is equal to the number of huge unit jobs (i.e. $h$) plus the number of open timeslots which do not have any huge job scheduled within. 
Consequently, the objective of minimizing the total busy time is equivalent to minimizing the number of open timeslots which do not have any huge job scheduled within. 

Here are the main ideas of both algorithms. We proceed timeslot by timeslot, 
with $Y$ and $\hY$ being the unscheduled huge jobs of $G$ and $\hG$ respectively.
{\bf First}, we check if a \emph{batch}
consisting of all the 
unscheduled huge jobs with deadline at most $\delta$
can cover all the timeslots from the current time to $\delta$.
If such a $\delta$ and such a batch exists, we make this schedule as, intuitively, there is nothing better one can do
with these huge jobs than spread them over as large an interval as possible. See Figure \ref{fig:bipartite} for an example. 
We also mark the timeslots where a huge job has already been scheduled.
The algorithms work the same except for the sets $Y$ and $\hY$, where $G$ has the
advantage of knowing the jobs arriving at further times (see the $1.5$ lower bound to see
how $\hG$ fails to find the optimum here). 
The check if a batch exists can be done by Claim \ref{claim:EDF_works} (also see Appendix \ref{s_pt}). 
Before moving to the next timeslot, we keep checking the existence of a batch and schedule its jobs if the batch exists. 

{\bf Second}, before moving to the next timeslot, if a timeslot already has a huge job, then all the available tiny jobs will be
scheduled in that timeslot. 
After that ({\bf third}), we look if any tiny job has {\bf reached its starting deadline}.
If so, we open the timeslot, schedule all the available tiny jobs, and also bring,
if possible, a huge job using EDF as criteria (the objective function suffers 
if we only schedule tiny jobs, and what better huge job to bring than the one whose
remaining window is smallest?). It is fairly intuitive that $G$ returns
an optimum solution;
our proof is however quite involved (as we could not find a simple proof). The proof for $G$ builds machinery that we also use for $\hG$.

Finding batches could be done by Bipartite Matching, or as described below.
For algorithms for Convex Bipartite Matching, as previous work \cite{Glover1967MaximumMI} has shown, Earliest-Deadline-First(EDF) works. 
The  following claim reproduces this result in the special case needed by our analysis.  
\begin{claim} \label{claim:EDF_works}
    Take any time $\tau\in \mathbb{Z}$ and any $\delta = \tau+1,\tau+2,\ldots$. 
    Let $J$ denote a set of huge unit jobs whose deadlines are at least $\tau+1$ and at most $\delta$. 
    If $J$ can cover all the timeslots during $[\tau,\delta]$, then EDF schedules a subset of $J$ at each timeslot during $[\tau,\delta]$ from left to right. 
\end{claim}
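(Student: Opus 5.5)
We prove the claim by an exchange argument against an arbitrary covering schedule, advancing timeslot by timeslot from $\tau$ to $\delta$. First we fix what EDF does. At timeslot $[t,t+1]$ for $t=\tau,\tau+1,\ldots,\delta-1$, EDF looks at the jobs of $J$ that are still unscheduled and available, i.e.\ with $a(j)\le t$ and $d(j)\ge t+1$. It schedules one of these with the earliest deadline. Scheduling one job per timeslot is enough to cover it, and we regard any remaining jobs as not placed in $[\tau,\delta]$. To show that EDF schedules a subset of $J$ at every timeslot of $[\tau,\delta]$, we must show it never meets a timeslot with no available unscheduled job.

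The core is the following invariant. For every $t$ with $\tau\le t\le\delta$, there is a schedule of $J$ covering all timeslots of $[\tau,\delta]$ that agrees with EDF on $[\tau,t]$. We may assume each timeslot of this schedule receives exactly one job, since extra jobs in a timeslot can be treated as unscheduled.

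The base case $t=\tau$ is the hypothesis. For the inductive step, take such a schedule $f$ and let $j^*$ be the job $f$ places in $[t,t+1]$. This $j^*$ is available at $t$ and is not used by EDF earlier, because $f$ agrees with EDF on $[\tau,t]$ and uses each job at most once. So EDF has a candidate at $t$; let $e$ be the job it picks, so $d(e)\le d(j^*)$. If $e=j^*$ we are done. Otherwise we modify $f$.
\begin{itemize}
\item If $f$ does not use $e$, we replace $j^*$ by $e$ at $[t,t+1]$.
\item If $f$ uses $e$ at a later timeslot $[s,s+1]$ with $s>t$, we swap $e$ and $j^*$.
\end{itemize}
In the swap, $e$ moves to $[t,t+1]$, which is feasible since EDF found it available there. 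Also $j^*$ moves to $[s,s+1]$, which is feasible because $a(j^*)\le t<s$ and $d(j^*)\ge d(e)\ge s+1$. Coverage is unchanged and the new schedule agrees with EDF on $[\tau,t+1]$. Applying the invariant at $t=\delta-1$ shows that EDF places a job in every timeslot of $[\tau,\delta]$.

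The only delicate point is the window feasibility of the swapped job, specifically that $j^*$ may legally move later. This rests on $d(j^*)\ge d(e)$, which is exactly what the EDF rule guarantees, together with $j^*$ having already arrived by time $t$. We also need $e$ to be unused by $f$ on $[\tau,t]$; this holds because EDF never reuses a job and $f$ agrees with EDF there.
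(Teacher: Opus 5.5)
Your proposal is correct and follows essentially the same route as the paper: an induction on the number of leading timeslots on which some covering schedule agrees with EDF, using the same replace-or-swap exchange justified by $d(e)\le d(j^*)$. You are slightly more explicit than the paper on two points: that moving $j^*$ later also needs $a(j^*)\le t<s$, and that the job $f$ places at $[t,t+1]$ guarantees EDF always has an available candidate.
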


\begin{proof}
    Let $l$ denote $\delta - \tau$. 
    Suppose $j_{1}, j_{2}, \ldots, j_{l}$ are scheduled within $[\tau,\tau+1], [\tau+1,\tau+2], \ldots, [\tau+l-1, \tau+l]$ respectively. 
    Let $S$ denote the above scheduling, and precisely let $S=(s_1,s_2,\ldots,s_l)$ be a vector such that $s_i = (j_i, [\tau+i-1, \tau+i])$ for $i = 1,2,\ldots,l$. 
    Take any $k = 0,1,\ldots,l-1$. 
    We prove by induction on the value of $k$ that there exists a scheduling that uses EDF for the first $k$ timeslots and covers all the timeslots during $[\tau,\delta]$. 
    Assume that $S$ uses EDF for the first $k$ timeslots, and we find a scheduling that uses EDF for the first $k+1$ timeslots. 
    Suppose $j^*$ has the earliest deadline among all the jobs except $j_1,\ldots,j_k$ that can be scheduled within timeslot $[\tau+k, \tau+k+1]$, i.e., $j^*\in \argmin_{j\in J\setminus \{j_1,j_2,\ldots,j_k\}: \tau + k\in [a(j),d(j)-1]} d(j)$. 
    Note that $d(j^*) \leq d(j_{k+1})$ since $j_{k+1}$ is among the above set of jobs. 
    If $j_{k+1} = j^*$, then we are done. 
    It suffices to assume $j_{k+1} \neq j^*$. 
    If $j^*\notin \{j_{k+2},j_{k+3},\ldots,j_l\}$, then replace the $(k+1)$-th entry $s_{k+1}$ of $S$ by $(j^*,[\tau+k, \tau+k+1])$ and we are done. 
    If $j^* = j_i$ for some $i = k+2,k+3,\ldots,l$, then replace $s_{k+1}$ and $s_i$ by $(j^*,[\tau+k, \tau+k+1])$ and $(j_{k+1}, [\tau + i - 1, \tau + i])$. 
    Note that $j_{k+1}$ can be scheduled during $[\tau + i - 1, \tau + i]$ because $j^*$ is scheduled within timeslot $[\tau + i - 1, \tau + i]$ and $d(j_{k+1}) \geq d(j^*)$. 
\end{proof}

Note that the algorithm as described does not specify which $\delta$ to choose
if more than one are available, and this does not affect the correctness
arguments that follow. 
Appendix \ref{s_pt} does have a fixed  method for choosing $\delta$.

For each batch $B$ of huge jobs scheduled by the algorithms $G$  and $\hG$
we have an interval $I(B)$ of timeslots where the jobs of the batch are scheduled.
The union of such $I(B)$'s is partitioned into maximal intervals that we call
$H$-blocks for  $G$ and $\hH$-blocks for $\hG$.
We call both $H$-blocks and $\hH$-block \emph{blocks}.

A  \emph{critical} timeslot is one that is opened by the
algorithm but not covered by batch.
Therefore, a critical timeslot opens because of a tiny job
reaching its starting deadline.
All the tiny jobs reaching a starting deadline 
are \emph{critical} tiny jobs.
The critical timeslots are partitioned into
\emph{tiny-only-slots}   (where, as the name suggests, the algorithm does not schedule a huge job)
and \emph{covered-slots} (where a huge job is scheduled by EDF).

\begin{claim} \label{claim:block_batch}
    All the huge jobs are scheduled.
    Each huge job scheduled within a block is also in a batch. Each huge job scheduled outside any block covers some covered-slot. 
\end{claim}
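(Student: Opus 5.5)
We prove the three parts of Claim \ref{claim:block_batch} in reverse order, since the second and third follow almost directly from how the algorithms $G$ and $\hG$ place huge jobs. We work with the unit huge jobs obtained by splitting. For both $G$ and $\hG$ there are exactly two ways a huge unit job is ever placed: (i) as part of a batch, scheduled via EDF over an interval $I(B)$; or (ii) by the third step, where a critical timeslot is opened because a tiny job reached its starting deadline, and EDF brings in one available huge job. We first verify this exhaustiveness by inspecting the description of the algorithm. The second step only adds tiny jobs to slots that already contain a huge job, so it places no huge job.

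\textbf{Second statement.} Take a huge job $j$ scheduled in a timeslot $t$ lying in some block. We show $j$ came from a batch, i.e.\ case (ii) cannot put a huge job inside a block. A case-(ii) placement happens in a critical timeslot, and by definition a critical timeslot is not covered by any batch. Blocks are unions of the intervals $I(B)$, so a critical slot lies outside every block. We also need that a timeslot, once covered by a batch, is not later reopened as critical. This holds because the batch check is run before the third step at each time, and batches only cover timeslots from the current time onward. Hence every huge job in a block arises from case (i), so it belongs to a batch.

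\textbf{Third statement.} A huge job scheduled outside every block cannot be from a batch, because batch jobs lie in $I(B)$, which is inside a block. So it was placed by case (ii), in a critical timeslot that contains a huge job. By definition that timeslot is a covered-slot.

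\textbf{First statement (all huge jobs are scheduled).} This is the main obstacle. Suppose some huge unit job $j$ is never scheduled, and consider the time $\tau=d(j)-1$. We argue that at time $\tau$ (or earlier) a batch must have been formed containing $j$. Let $Y$ (resp.\ $\hY$) be the unscheduled huge jobs when the algorithm processes slot $[\tau,\tau+1]$, and take $\delta=\tau+1$. The batch consisting of the unscheduled huge jobs with deadline at most $\tau+1$ contains $j$. It covers the single slot $[\tau,\tau+1]$ because $j$ can be placed there, so by Claim \ref{claim:EDF_works} the batch test succeeds. The algorithm keeps checking for batches before moving on, so some batch is scheduled. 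It remains to ensure $j$ itself ends up scheduled.

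The delicate point is that several huge units may share deadline $\tau+1$ with only one slot available, which could seem to make the batch infeasible. Here we use the hypothesis $p(j)\le d(j)-a(j)$, together with the invariant that earlier batches and EDF spread the units over distinct timeslots. We would try to maintain the invariant that, at every time $\tau$, the set of unscheduled huge units with deadline at most $\delta$ can be scheduled one per timeslot in $[\tau,\delta]$. This is a Hall-type condition for the convex bipartite matching, established via the EDF exchange argument of Claim \ref{claim:EDF_works}. With this invariant, each repeated batch check removes all units of deadline $\tau+1$ before time passes $\tau+1$.

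If maintaining this invariant directly proves difficult, we fall back on the following reading. Multiple batches and multiple huge units per slot are permitted, since extra huge jobs in a slot simply use extra machines. Under that reading, a batch covering $[\tau,\tau+1]$ may contain all units with deadline $\tau+1$, and completeness follows immediately.
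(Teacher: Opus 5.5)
Your second and third statements, and the argument you call a ``fallback'' for the first statement, are essentially the paper's proof. The route you present as primary for the first statement would fail, though.

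\textbf{First statement.} Your Hall-type invariant says that the unscheduled huge units with deadline at most $\delta$ can always be placed one per timeslot of $[\tau,\delta]$. This is false. Two distinct huge jobs with window $[0,1]$ and $p=1$ form a valid input that violates it at $\tau=0$. The per-job hypothesis $p(j)\le d(j)-a(j)$ says nothing about different jobs competing for the same slot.

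The invariant is also unnecessary, because the ``delicate point'' comes from misreading what a batch must do. A batch only has to \emph{cover} $[\tau,\delta]$, i.e.\ put at least one huge unit in each timeslot. Several huge units may share a slot, each on its own machine. In Figure~\ref{fig:bipartite}, $j_3$ and $j_5$ both go to $[2,3]$, and Section~\ref{s_hp} charges one machine per huge unit in a slot.

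So drop the invariant and state the argument outright. At time $d(j)-1$, consider the unscheduled huge units of deadline at most $d(j)$. Inductively they all have deadline exactly $d(j)$, so all of them can be placed in $[d(j)-1,d(j)]$. This set includes $j$, which covers that slot, so a batch exists. Any batch the algorithm picks at that time has $\delta\ge d(j)$. By definition it contains every unscheduled huge unit of deadline at most $\delta$, hence it contains $j$.

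\textbf{Second statement.} There is a smaller hole here. You must rule out the third step placing an EDF huge unit into a slot that is already covered by a batch. The facts you cite are that batches are checked first at each time, and that a batch found at time $t$ covers only slots from $t$ on. These give only the converse: a slot receiving an EDF unit is never covered by a later batch.

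The missing observation, which is the core of the paper's proof, is the second step. In a batch-covered slot, all available tiny jobs are scheduled before the third step runs. So no tiny job is left at its starting deadline there, and no EDF unit is added. With both directions in hand, batch placements and EDF placements occupy disjoint slots. Your exhaustiveness claim is then justified, and the second and third statements follow as you wrote them.
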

\begin{proof}
    If a huge job $j$ with deadline $d$ is not scheduled before time $d-1$,
    then a batch with deadline $d$ can be found at time $d-1$, and
    therefore $j$ is scheduled.
    
    Observe that if a batch is scheduled to cover timeslot $[t,t+1]$,
    then all the available tiny jobs are scheduled during timeslot $[t,t+1]$.
    The algorithm checks for batches before handling tiny jobs at their starting deadline.
    Therefore, any huge job scheduled during timeslot $[t,t+1]$ cannot be via EDF. 
    Any huge job scheduled outside any block cannot be in any batch, and hence must  be scheduled by EDF. 
\end{proof}

    Below, $s^*$ denotes the number of tiny-only-slots in an optimum solution.
    For an interval $E$, define
    $p_E(j) = \max\{0, p(j) - L(W_j \setminus E)\}$,
    the minimum number of timeslots of $E$ that must be open to accommodate job $j$,
    where the function $L(\cdot)$ returns the number of timeslots in the function's argument.
    For a set of huge unit jobs $H$,
    define $maxcoverage(H)$, the maximum coverage of the set $H$, to be the maximum number of timeslots 
    that $H$ can cover.
    For example, in Figure \ref{fig:bipartite}, the maximum coverage of $j_2,j_3,j_4,j_5,j_6,j_7$ is $5$,
    where one maximum is achieved when 
    $j_2,j_3,j_4,j_6,j_7$ are scheduled during $[1,2]$, $[2,3]$, $[3,4]$, $[4,5]$, and $[7,8]$ respectively.

    It is easy to check that for any two sets of huge jobs $H$ and $H'$,  $maxcoverage(H \cup H') \leq maxcoverage(H) + maxcoverage(H')$, and  if $H \subset H'$, then 
    $maxcoverage(H) \leq maxcoverage(H')$.
    As an aside,  $maxcoverage(H)$ can be computed by Bipartite Matching algorithms
    (the bipartite graph will have the set of huge jobs as one
    vertex part, and the set of timeslots as the other vertex part),
    and in fact by the special case of Convex Bipartite Matching \cite{Glover1967MaximumMI,lipski1981efficient,GALLO198431,linear_time_convex}.

\begin{figure}
    \centering
    \includegraphics[width=0.5\linewidth]{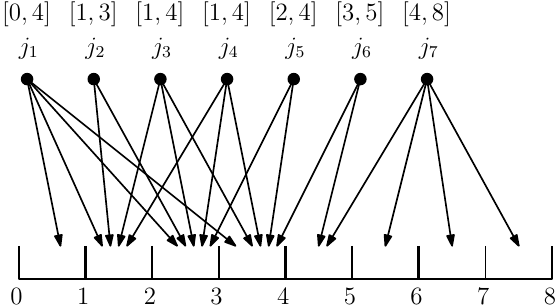}
    \caption{An example of batches with unit huge jobs, where we ignore tiny jobs for simplicity. 
    Suppose that the algorithm is at time $0$. 
    Each dot represents an unscheduled huge unit job that is also represented as an interval which is the window of the huge unit job. 
    Each arrow represents a possible assignment of a job into a timeslot. 
    For $G$, $Y$ is the set of all the huge unit jobs $\{j_1,j_2,\ldots,j_7\}$. 
    A batch can be found when $\delta = 4$ or $5$. 
    When $\delta = 4$, $\{j_1,j_2,j_3,j_4,j_5\}$ is a batch,
    where $j_1, j_2,j_3,j_4,j_5$ are scheduled during $[0,1]$, 
    $[1,2]$, $[2,3]$, $[3,4]$, and $[2,3]$ respectively. 
    When $\delta = 5$, $\{j_1,j_2,j_3,j_4,j_5,j_6\}$ is a batch, where $j_1,j_2,j_3,j_4,j_5,j_6$ are scheduled during $[0,1]$, $[1,2]$, $[2,3]$, $[3,4]$, $[2,3]$ and $[4,5]$ respectively. 
    For $\hG$, $\hY$ is $\{j_1\}$ because the rest of the jobs will arrive at or after time $1$,
    and hence no batch can be found because $j_1$ has a deadline at time $4$. }
    \label{fig:bipartite}
\end{figure}

    For any disjoint intervals $I_1, \ldots, I_u$ and  
    not necessarily distinct tiny jobs $j_1, \ldots, j_u$,
    we have:
\begin{equation}
\label{lb_hp'}
    s^* \geq \left(\sum_{l=1}^u p_{I_l}(j_l) \right)-
    maxcoverage(\{j \; | \; j \mbox{ huge and }   L(W_j \cap \left( \cup_l I_l\right)) \geq 1 \}). 
\end{equation}
\begin{proof} [Proof of (\ref{lb_hp'})]
    Take any optimal solution $S^*$.
    Suppose the number of open timeslots of $S^*$ during $\cup_l I_l$ is $K$. 
    Partition the $K$ open timeslots into the timeslots that are covered by some huge job and the timeslots that are not covered by any huge job, where we let $K_1$ and $K_2$ denote the cardinality of the two subsets respectively. 
    It follows that $K = K_1 + K_2$. 
    Note that $s^* \geq K_2$. 
    Note that the number of open timeslots of $S^*$ during $I_l$ is at least $p_{I_l}(j_l)$. 
    Since intervals $I_l$ are disjoint, we have $K \geq \sum^u_{l=1} p_{I_l}(j_l)$. 
    It remains to show $K_1 \leq maxcoverage(\{j \; | \; j \mbox{ huge and }  L( W_j \cap \left( \cup_l I_l\right) ) \geq 1 \})$. 
    Let $H$ denote the set of huge jobs scheduled during $\cup_l I_l$ by $S^*$. 
    We have $K_1 \leq maxcoverage(H)$ because jobs $H$ cover exactly $K_1$ timeslots in the scheduling $S^*$.
    We have $maxcoverage(H) \leq maxcoverage(\{j \; | \; j \mbox{ huge and }   L(W_j \cap \left( \cup_l I_l\right)) \geq 1 \})$, since the argument of $maxcoverage$ on the left is a subset of the argument of $maxcoverage$ on the right. 
\end{proof}

For each time instant $t\in \mathbb{Z}$, let $Y(t)$ denote the set of huge jobs available to $G$ to be scheduled at and after time $t$. 
Similarly, let $\hY(t)$ denote the set of huge jobs available to $\hG$ be scheduled at and after time $t$. 

\begin{claim}
\label{claim: any_batch_works}
    While running Algorithm $G$, if $Y(t_1)$ contains a set (which may or may not be chosen to be a batch by $G$)
    of huge jobs with deadlines at most $d_1$ that can cover all the timeslots during $[t_1,d_1]$,
    then each timeslot during $[t_1,d_1]$ is covered by some batch scheduled by $G$.  
\end{claim}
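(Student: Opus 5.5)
We argue by induction on time: for every $t$ with $t_1 \le t < d_1$, the timeslot $[t,t+1]$ is covered by some batch of $G$. Call the given set $X \subseteq Y(t_1)$. Fix a schedule of $X$ that covers every timeslot of $[t_1,d_1]$; by Claim \ref{claim:EDF_works} we may take it to be EDF from left to right.

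The invariant we carry is the following. When $G$ reaches time $t$ (with $t_1 \le t < d_1$), either $[t,d_1]$ is already covered by batches, or the set $X \cap Y(t)$ of still-unscheduled jobs of $X$ can cover every timeslot of $[t,d_1]$ that is not yet covered by a batch. Given this invariant, at time $t$ the algorithm $G$ has a candidate batch. We prefer to phrase it via the first uncovered timeslot $t' \ge t$ and the set $Y(t)$ restricted to deadlines at most $d_1$. This set contains $X \cap Y(t)$, so by monotonicity it can cover $[t',d_1]$ as well. Hence $G$ does not move past timeslot $[t',t'+1]$ without scheduling some batch, since it keeps checking for batches before advancing. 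That batch covers $[t',\delta]$ for some $\delta$, so $[t',t'+1]$ is covered.

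To maintain the invariant, we must show that the jobs $G$ removes from $Y$ do not destroy the coverability of the remaining uncovered slots of $[t,d_1]$. There are two kinds of removals.

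\textbf{Batch removals.} A batch $B$ chosen at time $t$ with deadline $\delta$ consists of all unscheduled huge jobs with deadline at most $\delta$, and it covers $[t,\delta]$.
\begin{itemize}
\item If $\delta \ge d_1$, the whole of $[t,d_1]$ is now covered and we are done.
\item If $\delta < d_1$, the jobs of $X$ removed by $B$ are exactly those with deadline at most $\delta$. In the EDF schedule of $X \cap Y(t)$, these jobs are not needed for slots after $\delta$; more precisely, the jobs of deadline greater than $\delta$ used on $(\delta,d_1]$ remain available. So the remainder still covers $[\delta,d_1]$. Making this precise requires an exchange argument: a job with deadline at most $\delta$ cannot be scheduled after $\delta$, so slots after $\delta$ are covered only by jobs with deadline greater than $\delta$.
\end{itemize}

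\textbf{EDF removals at critical timeslots.} A huge job taken by EDF at a covered-slot $[t,t+1]$ has the earliest deadline among available huge jobs. Since $[t,t+1]$ is now covered (by a huge job, though not by a batch), we need to handle this slot carefully. We expect this to be the main obstacle.

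First, we argue that this case cannot happen while $[t,t+1]$ lies inside $[t_1,d_1]$ and is uncovered. By the invariant a batch exists at time $t$, and $G$ checks for batches before handling critical tiny jobs. So at time $t$ the batch check succeeds and covers $[t,t+1]$ before any EDF step. Hence within $[t_1,d_1]$, slots are only ever filled by batches, and EDF removals occur only in slots already covered.

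Second, tiny-job handling in slots already containing a batch job removes no huge job. Thus the only removals are batch removals, which the previous paragraph handles. The delicate part is verifying the ordering of checks in $G$ at each time step, and then carrying out the exchange argument for batches with $\delta < d_1$ cleanly.
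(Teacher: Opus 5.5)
Your proof is correct and uses the same two facts as the paper's proof. First, a batch with deadline $\delta$ removes only jobs of deadline at most $\delta$, and those cannot serve slots after $\delta$. Second, no EDF step can occur in a slot of $[t_1,d_1]$, because the batch check precedes tiny-job handling. The paper packages this as a contradiction at the first slot $[t^*,t^*+1]$ not covered by a batch: all earlier batch jobs then have deadline at most $t^*$, so the witness jobs for $[t^*,d_1]$ are still in $Y(t^*)$. You run the same reasoning as a forward invariant, and the ``exchange argument'' you anticipate is just the one-line deadline observation you already state.
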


\begin{proof}
    We show the claim by contradiction.
    Assume that there exists some timeslot $[t^*,t^*+1]$ that is not covered by any batch and each timeslot within $[t_1,t^*]$ is covered by some batch, where $t^* = t_1,t_1+1,\ldots,d_1-1$. 
    Suppose a set of jobs $j_1,j_2,\ldots,j_l$ in $Y(t_1)$ can cover timeslots $[t_1,t_1+1], [t_1+1,t_1+2], \ldots, [d_1-1, d_1]$.
    We can take a subset and assume that $l = d_1 - t_1$ and, for $i \in \{1, 2, \ldots, l\}$, job $j_i$ can cover timeslot $[t_1 + i-1, t_1+i]$.
    Since no batch covers timeslot $[t^*,t^*+1]$, we have that at time $t^*$
    all the huge jobs that have been scheduled in batches have deadlines at most $t^*$. 
    Since each timeslot within $[t_1,t^*]$ is covered by some batch, Claim \ref{claim:block_batch} says that none of the timeslots within $[t_1,t^*]$ is critical and each of the huge jobs scheduled within $[t_1,t^*]$ is in some batch. 
    With $l' = t^* - t_1+1$, it follows that 
    $j_{l'}, \ldots, j_l$ are in $Y(t^*)$. 
    Note that $j_{l'}, \ldots, j_l$ have deadlines at most $d_1$ and can be scheduled within timeslots $[t^*,t^*+1], \ldots, [d_1-1, d_1]$ respectively. 
    By definition, algorithm $G$ should schedule a batch starting at time $t^*$, and hence we have a contradiction. 
\end{proof}

\begin{claim}
\label{claim: G_covereage}
    The maximum coverage of the set of huge jobs scheduled by $G$ in an $H$-block is the length of the block.
\end{claim}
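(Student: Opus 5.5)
Let the $H$-block be the interval $[\alpha,\beta]$ of timeslots, and let $H_{\mathrm{blk}}$ be the set of huge jobs scheduled by $G$ inside it. I will prove the two inequalities $maxcoverage(H_{\mathrm{blk}}) \geq \beta-\alpha$ and $maxcoverage(H_{\mathrm{blk}}) \leq \beta-\alpha$ separately.

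\textbf{Lower bound.} The block is a union of intervals $I(B)$ of batches, and every timeslot in $[\alpha,\beta]$ is covered by some batch. A batch scheduled at time $\tau$ with deadline $\delta$ places a job in every timeslot of $[\tau,\delta]$. Choosing one scheduled job per timeslot of the block therefore gives an injective assignment of jobs of $H_{\mathrm{blk}}$ to the $\beta-\alpha$ timeslots. By Claim \ref{claim:block_batch}, every huge job inside the block belongs to a batch, so the chosen jobs all lie in $H_{\mathrm{blk}}$. Hence $maxcoverage(H_{\mathrm{blk}}) \geq \beta-\alpha$.

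\textbf{Upper bound.} Here I need that no job of $H_{\mathrm{blk}}$ can be placed outside $[\alpha,\beta]$.
\begin{itemize}
\item \emph{Right end.} Every batch in the block has a deadline $\delta$ with $[\tau,\delta]\subseteq[\alpha,\beta]$. Batch jobs have deadline at most $\delta$, so every job in $H_{\mathrm{blk}}$ has deadline at most $\beta$.
\item \emph{Left end.} A job of $H_{\mathrm{blk}}$ might have arrival time before $\alpha$, which could let it be moved left of $\alpha$. This is the main obstacle.
\end{itemize}

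To handle the left end, I argue about the time $\alpha-1$, i.e.\ the timeslot $[\alpha-1,\alpha]$, which is not covered by any batch by maximality of the block.
\begin{itemize}
\item Jobs of $H_{\mathrm{blk}}$ are in $Y(\alpha)$, since they are scheduled at time $\ge\alpha$.
\item A job $j$ of $H_{\mathrm{blk}}$ with $a(j)\le\alpha-1$ was available at time $\alpha-1$, so it is in $Y(\alpha-1)$, unless the EDF step placed some other job in $[\alpha-1,\alpha]$.
\end{itemize}

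If the jobs of $H_{\mathrm{blk}}$ could cover more than $\beta-\alpha$ timeslots, they would cover at least one slot left of $\alpha$ (the right end is already fixed at $\beta$). Let $t_1<\alpha$ be the latest slot among such extra coverage. A matching/exchange argument, using that the block itself is fully covered, then shows that $Y(t_1)$ contains jobs with deadlines at most $\beta$ that cover all of $[t_1,\beta]$. The exchange is needed because jobs covering the extra slots can be shifted to make the covered interval contiguous: a maximum coverage restricted to windows ending by $\beta$ combined with the block's perfect cover yields an augmenting chain.

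Claim \ref{claim: any_batch_works} then says every timeslot in $[t_1,\beta]$ is covered by a batch. In particular $[\alpha-1,\alpha]$ is covered by a batch, contradicting maximality of the block.

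\textbf{Key subtlety.} The justification that those jobs are in $Y(t_1)$ is delicate. Jobs scheduled earlier by EDF in covered-slots are gone, but jobs in $H_{\mathrm{blk}}$ were unscheduled until time $\ge\alpha$, so they belong to $Y(t)$ for all $t\le\alpha$ at which they have arrived. To avoid relying on arrivals, I would first try choosing $t_1$ as the leftmost slot reachable via a contiguous extension: take a maximum coverage, and use the Hall-type contiguity from Claim \ref{claim:EDF_works} (EDF run from $t_1$). The coverage of $[t_1,\alpha]$ and the block jointly give a cover of $[t_1,\beta]$ by jobs of $H_{\mathrm{blk}}\subseteq Y(t_1)$, which is exactly the hypothesis Claim \ref{claim: any_batch_works} needs.
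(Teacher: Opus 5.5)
\textbf{There is a gap in the key step.} Your lower bound and the right-end observation are fine, and your top-level plan matches the paper's. From a schedule $C$ of $H_{\mathrm{blk}}$ covering more than $\beta-\alpha$ slots, you extract jobs of $Y(t)$, with $t\le\alpha-1$, whose deadlines are at most some $d\ge\alpha+1$ and which cover all of $[t,d]$. Claim~\ref{claim: any_batch_works} then puts $[\alpha-1,\alpha]$ inside a batch, contradicting maximality of the block. The problem is that this extraction is the entire content of the claim, and you only assert it (``a matching/exchange argument \dots\ yields an augmenting chain''). The concrete versions you do write down are not valid inferences. Take $t_1$ to be the latest slot left of $\alpha$ covered by $C$. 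The slots strictly between $t_1$ and $\alpha$ may be empty in $C$, and there may be too few jobs to fill them. For example, let $\alpha=2$, $\beta=3$, and take jobs $x$ with window $[0,3]$ and $y$ with window $[2,3]$. The schedule $x\mapsto[0,1]$, $y\mapsto[2,3]$ covers two slots with $t_1=0$, but two jobs cannot cover $[0,3]$. Such a configuration cannot occur for $G$ (here $G$ would have opened a batch at time $1$), but only because of the claim you are proving. Your inference is purely combinatorial, so it would have to hold in this example too. Likewise, gluing ``the coverage of $[t_1,\alpha]$'' to the block's cover of $[\alpha,\beta]$ presumes the two covers use disjoint jobs, and that is exactly what can fail.

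\textbf{How the paper closes it.} It normalizes $C$: take $C$ of maximum coverage that is right-shifted, meaning no job can move to a later uncovered slot of its window. Let $\hat t$ be the leftmost covered slot and $\hat d$ the first uncovered slot after it. There are at least $\beta-\alpha+1$ covered slots, all ending by $\beta$, so $\hat t\le\alpha-1$. Right-shiftedness forces every job placed in $[\hat t,\hat d]$ to have deadline at most $\hat d$; otherwise it could move to $[\hat d,\hat d+1]$. Since every job of $H_{\mathrm{blk}}$ has deadline at least $\alpha+1$, this also gives $\hat d\ge\alpha+1$. That is exactly the hypothesis of Claim~\ref{claim: any_batch_works} at time $\hat t$.

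\textbf{Your augmenting-chain idea also works if made precise.} Let $M$ match each slot of $[\alpha,\beta]$ to a job that $G$ put there. Since $|C|>|M|$, $M\oplus C$ contains an $M$-augmenting path. Augmenting keeps every slot of $[\alpha,\beta]$ matched and adds one new slot $s<\alpha$. The job $j$ matched to $s$ has $a(j)\le s$ and $d(j)\ge\alpha+1$, so it can be slid to $[\alpha-1,\alpha]$. This gives a cover of $[\alpha-1,\beta]$ by jobs with deadlines at most $\beta$.

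\textbf{The $Y(t)$ step.} Either version needs $H_{\mathrm{blk}}\subseteq Y(t)$ for the relevant $t\le\alpha-1$. This holds outright, because for the offline $G$, $Y(t)$ contains every huge job not yet scheduled, \emph{whether or not it has arrived}. Your restriction to jobs ``at which they have arrived'' is unnecessary here. If that restriction were genuinely in force, the argument would fail, since jobs used in the cover may arrive after $t$; this is precisely the step that breaks for $\hG$.
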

\begin{proof}
    Take any $H$-block denoted by $[t_1,d_1]$. 
    We show it by contradiction. 
    Let $H^*$ denote the set of huge jobs scheduled in the taken $H$-block $[t_1,d_1]$. 
    Assume that $C^*$ is a schedule
    of $H^*$ on the set of timeslots $\{[t,t+1]\;|\;t\in T^*\}$
    such that no job in $H^*$ can be feasibly moved rightwards while $|T^*| \geq d_1 - t_1 + 1$. 
    Let $\hat{t} := \min T^*$ such that $[\hat{t},\hat{t}+1]$ is the first timeslot covered by $C^*$. 
    It follows from $|T^*| \geq d_1 - t_1 + 1$ that $\hat{t} \leq t_1 - 1$. 
    Let $\hat{d} := \min \{t\notin T^*: t\geq \hat{t}\}$ such that
    $[\hat{d},\hat{d}+1]$ is the first timeslot that is after time $\hat{t}$ and uncovered by $C^*$
    (it is possible that $\hat{d} = d_1$). 
    Note that all the jobs in $H^*$ have deadlines at most $d_1$. It follows that $\max T^* \leq d_1-1$. 
    Let $j_t$ denote the job in $H^*$ covering timeslot $[t,t+1]$ for each $t\in T^*$. 
    By the choice of the schedule $C^*$, we have $d(j_t) \leq \hat{d}$ for each $t = \hat{t}, \hat{t}+1,\ldots,\hat{d}-1$, otherwise some job would be moved rightwards to cover $[\hat{d}, \hat{d}+1]$.  
    Therefore, we find a nonempty set of jobs
    $\{j_t \; | \; t = \hat{t}, \hat{t}+1,\ldots,\hat{d}-1\}$ with deadlines
    at most $\hat{d}$ that can cover all the timeslots within
    $\left[ \hat{t},\hat{d} \right]$. 
    Furthermore, note that $\hat{d} \geq t_1+1$ since each job in
    $H^*$ has deadline at least $t_1+1$. 
    Recall $\hat{t} \leq t_1 - 1$.
    Note that $H^*\subseteq Y(t_1)\subseteq Y(\hat{t})$ since $\hat{t} \leq t_1 - 1$.
    Thus $Y(\hat{t})$ contains a set of huge jobs with deadlines at most $\hat{d}$
    that can cover all the timeslots during $[\hat{t},\hat{d}]$.
    Claim \ref{claim: any_batch_works} says timeslot $[t_1-1,t_1]$ is covered by some batch scheduled by $G$ (contradiction). 
\end{proof}

As an aside,  Claim \ref{claim: any_batch_works} holds for $\hG$, but in the proof above,  $\hY(t_1) \subseteq \hY(\hat{t})$  can fail
since the jobs arriving after time $\hat{t}$ but before $t_1$ are not known to the algorithm.

\begin{claim} \label{c_hH_block}
    Suppose a $\hH$-block starts at time $t_1$. 
    There can be at most $d_1 - t_1$ jobs in $\hY(t_1-1)$ with deadline at most $d_1$ for each $d_1 = t_1, t_1+1, \ldots$.
\end{claim}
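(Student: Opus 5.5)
We argue by contradiction, using how the online algorithm $\hG$ behaves at time $t_1-1$, just before the $\hH$-block starts. Fix $d_1 \geq t_1$ and let $J' \subseteq \hY(t_1-1)$ be the jobs with deadline at most $d_1$. Suppose $|J'| \geq d_1 - t_1 + 1$. The goal is to show that $\hG$ would then have scheduled a batch covering timeslot $[t_1-1,t_1]$. That batch would make the block containing $[t_1,\ldots]$ extend to time $t_1-1$ or earlier, contradicting that a $\hH$-block starts at $t_1$. (If $d_1 = t_1$, the statement says $\hY(t_1-1)$ has no job with deadline at most $t_1$; this is the same argument with $J'$ nonempty and a batch of length one ending at $t_1$.)

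The key step is to produce, from $J'$, a subset with deadlines at most some $\delta \leq d_1$ that covers every timeslot of $[t_1-1,\delta]$. I would copy the shifting argument in the proof of Claim \ref{claim: G_covereage}.
\begin{itemize}
\item Take a schedule $C$ of $J'$ (unit huge jobs) that covers the maximum number of timeslots, and among those one in which no job can be moved rightwards feasibly.
\item Every job of $J'$ is still unscheduled at time $t_1-1$, so its remaining window lies in $[t_1-1,d_1]$. These are at most $d_1-t_1+1$ timeslots.
\item If $C$ covers all of them, take $\delta=d_1$.
\item Otherwise let $\hat d$ be the first uncovered timeslot start at or after $\hat t=t_1-1$. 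By right-shift maximality, every job covering a slot in $[\hat t,\hat d]$ has deadline at most $\hat d$; otherwise it could move into $[\hat d,\hat d+1]$.
\end{itemize}
The main obstacle is the second case, where we must guarantee that $[t_1-1,t_1]$ is covered at all, so that $\hat d > t_1-1$. Here I would use pigeonhole: more than $d_1-t_1$ jobs compete for at most $d_1-t_1+1$ slots. Because $C$ maximizes coverage and cannot be shifted right, the uncovered slots can only lie after a maximal covered prefix starting at $t_1-1$.

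If the prefix were empty, slot $[t_1-1,t_1]$ would be uncovered. Then all $\geq d_1-t_1+1$ jobs would have to fit into the $d_1-t_1$ slots of $[t_1,d_1]$, so two jobs share a slot. By the right-shift normalization, a duplicated job whose window contains $t_1-1$ can be moved into $[t_1-1,t_1]$, increasing coverage. This contradicts maximality, provided such a job exists. I expect the care needed here: we need a job of $J'$ whose window contains $t_1-1$. If none does, all of $J'$ has arrival time at least $t_1$, which should contradict $J'\subseteq \hY(t_1-1)$ being available to the online algorithm at time $t_1-1$. If the definition of $\hY$ allows jobs with later arrival, I would instead restrict attention to the jobs actually known at $t_1-1$.

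With the covering set in hand, $\hG$ at time $t_1-1$ finds (via Claim \ref{claim:EDF_works}) a batch covering $[t_1-1,\delta]$ with $\delta\geq t_1$. Hence $[t_1-1,t_1]$ lies in some $I(B)$, and by maximality of blocks the $\hH$-block would not start at $t_1$, which is the contradiction.
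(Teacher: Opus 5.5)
Your argument is correct, but it reaches the covering set by a different route than the paper.

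\textbf{The paper's route.} The paper sorts the jobs $j_1,\dots,j_k$ of $\hY(t_1-1)$ with deadline at most $d_1$ by deadline. Assuming $k\ge d_1-t_1+1$, it takes the smallest $x^*$ with $d(j_{x^*})=t_1-1+x^*$. This index exists because otherwise $d(j_x)\ge t_1+x$ propagates up to $d(j_{d_1-t_1+1})\ge d_1+1$. Then $j_1,\dots,j_{x^*}$ go one per slot in $[t_1-1,t_1-1+x^*]$ by EDF (or by Hall's theorem in Appendix~\ref{a_ps}). This gives an explicit batch with $\delta=t_1-1+x^*$.

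\textbf{Your route.} You take an extremal schedule of $J'$ inside $[t_1-1,d_1]$: maximum coverage, then right-shifted. Pigeonhole, plus the fact that every job of $J'$ fits in $[t_1-1,t_1]$, shows that this slot is covered. You then read off $\delta=\hat d$ as the first uncovered slot, exactly as in the proof of Claim~\ref{claim: G_covereage}.

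\textbf{Comparison.} The paper's version is shorter and constructive: a counting condition on sorted deadlines pins down the batch directly. Yours is existential, but it reuses the section's shifting machinery uniformly.

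\textbf{Two small fixes.} First, state the normalization concretely. For example, among maximum-coverage schedules choose one maximizing the sum of slot indices. Then a job placed before $\hat d$ with deadline above $\hat d$ could move to $[\hat d,\hat d+1]$ without losing coverage, a contradiction. Second, drop the hedge about later arrivals. Jobs in $\hY(t_1-1)$ have arrived by $t_1-1$ by definition of the online algorithm, and the paper relies on the same fact.
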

\begin{proof} 
    Suppose $j_1,j_2,\ldots,j_{k}$ are the jobs in $\hY(t_1-1)$ with deadline at most $d_1$ such that $d(j_1) \leq d(j_2) \leq \cdots \leq d(j_{k}) \leq d_1$. 
    Since jobs $\hY(t_1-1)$ are available to be scheduled during timeslot $[t_1-1,t_1]$, we have $t_1 \leq d(j_1)$. 
    Assume $k \geq d_1-t_1+1$ for the sake of contradiction. 
    Let $x^*$ be the smallest number among $1,2,\ldots,k$ such that $d(j_{x^*}) = t_1-1+{x^*}$. 
    Note that $x^*$ exists, otherwise $d(j_x) \geq t_1 + x$ for each $x = 1,2,\ldots,d_1 - t_1 + 1$, and hence $d_1 + 1 \leq d(j_{d_1-t_1+1}) \leq d(j_k)$ would happen. 
    Since $x^*$ is the smallest such number, we have $d(j_x) \geq t_1 + x$ for each $x = 1,2,\ldots,x^*-1$ and $d(j_{x^*}) = t_1 - 1 + x^*$. 
    
    \textit{Method of EDF(see an argument based on Hall's theorem in Appendix \ref{a_ps}):} Consider scheduling jobs $j_1,j_2,\ldots,j_{x^*}$ within $[t_1-1,t_1 - 1 + x^*]$ by EDF. 
    It is clear that $j_1,j_2,\ldots,j_{x^*}$ are scheduled within $[t_1-1,t_1], [t_1, t_1+1], \ldots, [t_1-2+x^*, t_1 -1 + x^*]$ respectively by EDF (schedule the job with the smallest index when the ties happen). 
    
    It follows that algorithm $\hG$ schedules a batch starting from time $t_1-1$. 
    Since a $\hH$-block starts at time $t_1$, by definition, the timeslot $[t_1-1,t_1]$ is not covered by any batch. We have a contradiction on the scheduling during timeslot $[t_1-1,t_1]$. 
\end{proof}

\begin{claim}
\label{claim: hG_covereage}
    The maximum coverage of the set of huge jobs scheduled in an $\hH$-block is
    at most twice the length of the block.
\end{claim}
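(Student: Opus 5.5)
Fix an $\hH$-block $[t_1,d_1]$ of length $L = d_1-t_1$, and let $\hat H$ be the set of huge jobs $\hG$ schedules inside it. By Claim \ref{claim:block_batch}, every job of $\hat H$ belongs to some batch. The plan is to split $\hat H$ into two parts and bound the maximum coverage of each part by $L$. Subadditivity of $maxcoverage$ then gives the bound $2L$.

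The first part is $\hat H_1 = \hat H \cap \hY(t_1-1)$: the jobs of the block that were already available to $\hG$ before the block started. The second part is $\hat H_2 = \hat H \setminus \hat H_1$: jobs that arrive at time $t_1$ or later, so their windows begin at or after $t_1$.

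For $\hat H_2$, every window starts no earlier than $t_1$. Every deadline is at most $d_1$, because each job lies in a batch whose interval is contained in the block, and a batch only contains jobs with deadline at most the batch's right end. So any schedule of $\hat H_2$ uses only timeslots inside $[t_1,d_1]$, which gives $maxcoverage(\hat H_2) \le L$.

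For $\hat H_1$, every job was available during timeslot $[t_1-1,t_1]$, so its deadline is at least $t_1$. By the argument at the start of Claim \ref{c_hH_block}, in fact no job of $\hY(t_1-1)$ has deadline exactly $t_1$ (otherwise a batch would start at $t_1-1$). Claim \ref{c_hH_block} says that for each $d \ge t_1$ there are at most $d-t_1$ jobs of $\hY(t_1-1)$ with deadline at most $d$. Applying this with $d=d_1$ gives $|\hat H_1| \le d_1 - t_1 = L$. Since each unit job covers at most one timeslot, $maxcoverage(\hat H_1) \le |\hat H_1| \le L$.

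The main obstacle is the claim, used for $\hat H_2$, that every job in the block has deadline at most $d_1$. A block is a union of batch intervals $I(B)$. Each batch found at a time $\tau$ consists of unscheduled jobs with deadline at most its $\delta$, and it covers $[\tau,\delta]$, so $\delta$ is at most the block's right end. Care is needed because later batches may start inside an earlier batch's interval, but the maximality of the block still bounds every $\delta$ by $d_1$. A second check is that a job of $\hat H_2$ placed in the block really has its window inside $[t_1,d_1]$; this follows from its arrival time being at least $t_1$ together with the deadline bound.
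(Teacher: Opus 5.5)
Your proof is correct and is essentially the paper's argument. The paper likewise splits the block's jobs into two groups. The first is those in $\hY(t_1-1)$: there are at most $d_1-t_1$ of them by Claim \ref{c_hH_block}, so they can occupy at most $d_1-t_1$ slots to the left of the block. The second is those arriving at or after $t_1$, which batch maximality (deadlines at most $d_1$) confines to $[t_1,d_1]$. The only cosmetic difference is that you combine the two bounds through subadditivity of $maxcoverage$, whereas the paper counts slots inside and to the left of the block in an arbitrary schedule $C$.
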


\begin{proof}
    Let $J$ denote the set of huge jobs scheduled in the $\hH$-block. 
    Suppose the $\hH$-block is $[t_1,d_1]$. 
    Let $J'$ be the set of huge jobs in $\hY(t_1-1)$ with deadline at most $d_1$. 
    By Claim \ref{claim:block_batch}, all the jobs $J$ are in batches and therefore,
    taking into account the maximality of the $\hH$-block,
    have deadlines at most $d_1$. 
    Since $\hY$ is obtained online, we have $t_1 \leq a(j)$ for each $j\in J\setminus J'$. 
    Now, take any scheduling $C$ of $J$. 
    In scheduling $C$, jobs from $J\setminus J'$ can only be scheduled in timeslots within the $\hH$-block $[t_1,d_1]$. 
    Claim \ref{c_hH_block} says $|J\cap J'|\leq |J'| \leq d_1 - t_1$. 
    It follows that the number of the timeslots outside (precisely, to the left of) the $\hH$-block that $C$ assigns to a job from $J$ is at most $d_1-t_1$. 
    Therefore, $C$ schedules jobs from $J$ in a number of timeslots at most twice the length of the $\hH$-block.
\end{proof}

We use an algorithm $\cA$ to construct the intervals and jobs from lower bound (\ref{lb_hp'})
(we tried and failed to construct these intervals with mathematical properties only).
We call these intervals, each with an attached job, $L$-intervals.
Algorithm  $\cA$ is used for both $G$ and $\hG$, with the only difference
that the algorithm $\cA$ for $G$ uses $H$-blocks and 
the algorithm $\cA$ for $\hG$ uses $\hH$-blocks.

Algorithm $\cA$ constructs (disjoint) $L$-intervals and attached jobs from right to left.
The $L$-intervals will either fully contain a block or be disjoint from it.
A block is called \emph{active}  if it is not contained in an $L$-interval,
and \emph{passive} otherwise. Note that active blocks may turn into passive blocks
as the algorithm $\cA$ constructs more $L$-intervals.

An \emph{active chain} connects a covered-slot to an active block in the following way.
An active chain that connects a covered-slot $[t^*,t^*+1]$ 
to an active block scheduled in the interval
$[t'_1,d_1]$ is a sequence of huge jobs
$j_q, j_{q-1}, \ldots, j_1$ such that $j_1$  is scheduled 
in timeslot $[t_1,t_1+1]$ within $[t'_1,d_1]$,
$j_x$ covers (by EDF) covered timeslot $[t_x, t_x+1]$ for $x = 2,3,\ldots,q$,
$a(j_x)\leq t_{x+1} < t_x$ for $x = 1,2,\ldots,q$, where $t_{q+1} := t^*$. 
Note that $t_2 < t'_1$.
See Figure \ref{fig:active_chain} for examples of active chains. 

\begin{figure} [h]
    \centering
    \includegraphics[width=0.8\linewidth]{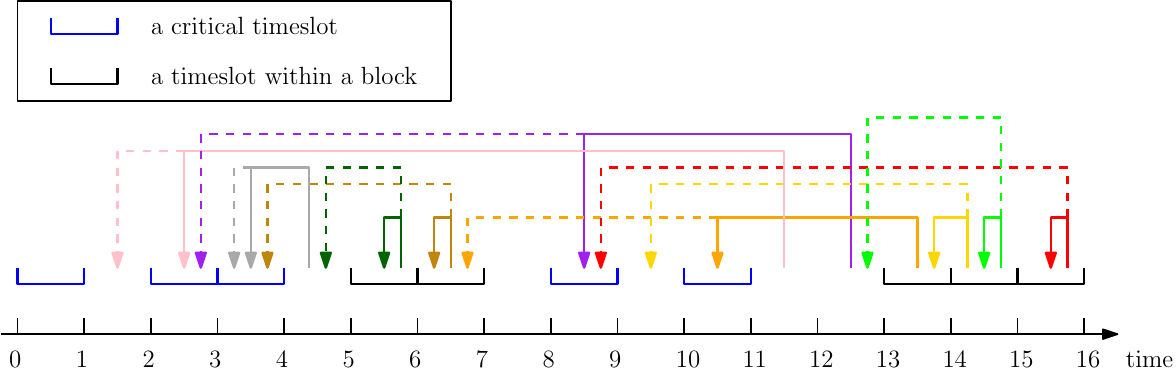}
    \caption{Active chains. Each color represents a huge job. For each huge job, the solid vertical line represents the last timeslot of the job's window,
    the solid arrow where the job is placed by the algorithm, and the dashed arrow the first timeslot of the job's window.
    Precisely, suppose huge jobs $\textcolor{pink}{2}, \textcolor{gray}{3}, \textcolor{teal}{5}, \textcolor{brown}{6}, \textcolor{violet}{8}, \textcolor{orange}{10}, \textcolor{yellow}{13}, \textcolor{green}{14}, \textcolor{red}{15}$ are scheduled within the open timeslots ${[2,3]}, {[3,4]}, {[5,6]}, {[6,7]}, {[8,9]}, {[10,11]}, {[13,14]}, {[14,15]}, {[15,16]}$ respectively. 
    We have $W_{\textcolor{pink}{2}} = [1,12]$, $W_{\textcolor{gray}{3}} = [3,5]$, $W_{\textcolor{teal}{5}} = [4,6]$, $W_{\textcolor{brown}{6}} = [3,7]$, $W_{\textcolor{violet}{8}} = [2,13]$, $W_{\textcolor{orange}{10}} = [6,14]$, $W_{\textcolor{yellow}{13}} = [9,15]$, $W_{\textcolor{green}{14}} = [12,15]$, and $W_{\textcolor{red}{15}} = [8,16]$. 
    $[13,16]$ is an active block. 
    Jobs $j_1 = \textcolor{red}{15}, j_2 = \textcolor{violet}{8}$ constitute an active chain that connects the covered-slot $[3,4]$ to the active block $[13,16]$.
    $[5,7]$ is an active block. 
    Job $j_1 = \textcolor{brown}{6}$ constitutes an active chain that connects the covered-slot $[3,4]$ to the active block $[5,7]$. 
    }
    \label{fig:active_chain}
\end{figure}

A covered-slot that is the endpoint of an active chain is called \emph{deceptive},
the covered-slots that are not deceptive are called \emph{honest}.
Note that deceptive covered-slots may become honest as algorithm $\cA$ progresses,
as there  may be fewer active blocks.

The algorithm $\cA$ proceeds from right to left and finds the first (critical) timeslot $[t_r, t_r+1]$
that is tiny-only or honest. $\cA$ constructs an $L$-interval $I$ with right endpoint $t_r+1$.
Then the algorithm $\cA$ selects and attaches to $I$ a
critical job  $j$ that has a starting deadline at $t_r$, which means that all the timeslots
between $t_r$ and $d(j)$ have the job $j$ assigned to them. 
If the arrival time of $j$, $a(j)$ falls in a block $B$, then
$I$ starts
right where $B$ ends; otherwise $I$ starts at the arrival time of $j$.
After $I$ is constructed, we look for the next 
(going right-to-left, and starting strictly before the left endpoint of $I$) 
critical timeslot that is tiny-only or honest (note that the set of active blocks might
have changed, as a block may end up in $I$, and therefore some deceptive covered timeslots may become honest).
And the process continues. 

\begin{claim} \label{claim:cA_results}
        The $L$-intervals created by $\cA$ are disjoint.
        After $\cA$ finishes executing, any tiny-only-slot or any honest covered-slot is inside an $L$-interval, and  every passive 
        block is inside $\left( \cup_l I_l\right)$ and every active block is  outside $\left( \cup_l I_l\right)$. 
\end{claim}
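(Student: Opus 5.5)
The plan is an induction over the iterations of $\cA$, which runs right to left. The invariant after each iteration is: (i) the $L$-intervals built so far are pairwise disjoint, and all of them lie at or to the right of the left endpoint $\ell$ of the most recent one; (ii) every block is either contained in an $L$-interval or disjoint from all of them; (iii) every tiny-only-slot and every covered-slot that is honest at that moment, lying at or to the right of $\ell$, is inside some $L$-interval. The statement follows once the invariant is shown to survive one step and to imply the final claim at termination.

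Disjointness and the block dichotomy come first. A new interval $I$ has right endpoint $t_r+1$, and $[t_r,t_r+1]$ is chosen strictly before the previous left endpoint, so $I$ is disjoint from all earlier $L$-intervals provided $I$ is an interval with left endpoint at most $t_r$. For that I use the attached critical job $j$. Since its starting deadline is $t_r$ and it is available, $a(j)\le t_r$. If $a(j)$ lies in a block $B$, then $B$ ends no later than $t_r$, because $[t_r,t_r+1]$ is critical and hence not covered by any batch. So the left endpoint is at most $t_r$ in both cases.

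For the block dichotomy, the right endpoint $t_r+1$ never cuts a block, since $[t_r,t_r+1]$ is not in a block. The left endpoint is either the end of a block $B$ or $a(j)$ when $a(j)$ is not interior to any block. In the first case $B$ lies outside $I$. In the second case $a(j)$ is not strictly inside a block, so no block is cut. I have to take care to read ``falls in a block'' so that an $a(j)$ at the left boundary of a block counts as not cutting it. Blocks strictly between the endpoints are contained in $I$ and become passive.

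The main obstacle is coverage (iii), because honesty is not monotone. A covered-slot skipped earlier as deceptive could become honest later, after its active block has been swallowed by an $L$-interval further left. I would resolve this by noting two facts. First, an active chain from a covered-slot $[t^*,t^*+1]$ connects it to a block lying to its right: $t_{x+1}<t_x$ and $j_1$ lies in the block. Second, $\cA$ only creates new $L$-intervals to the left of the current position. So once the scan has passed $t^*$, any block to the right of $t^*$ that is active stays active, because later $L$-intervals lie entirely left of $\ell\le t^*$ and cannot contain such a block. The chain witnessing deceptiveness therefore persists, and the slot remains deceptive. Hence honest-or-tiny-only slots to the right of $\ell$ are exactly those the scan examined, and each examined one either started an interval or already lay inside one.

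Two details must be checked here. The chain's block must genuinely lie to the right of $t^*$, not merely have $j_1$ to the right of $t^*$; this holds because blocks are intervals disjoint from critical slots, and $t_1>t^*$. I also need that the scan's choice of the first honest slot is consistent with honesty being evaluated at the current moment.

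At termination no tiny-only or honest slot remains uncovered. Blocks contained in an $L$-interval are by definition passive, hence inside $\cup_l I_l$, and active blocks are disjoint from $\cup_l I_l$ by (ii). This gives all parts of Claim~\ref{claim:cA_results}.
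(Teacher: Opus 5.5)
Your proposal is correct and follows essentially the same route as the paper. Disjointness comes from the right-to-left construction. Coverage comes from the observation that deceptive slots to the right of the newest $L$-interval stay deceptive, because later $L$-intervals lie further left and cannot absorb the active block at the end of their chain. The block dichotomy comes from the rightmost slot of an $L$-interval being critical and the left endpoint being chosen at a block boundary. Your invariant-based induction only makes explicit some details the paper leaves implicit, such as the left endpoint being at most $t_r$ and the chain's block lying strictly to the right of $t^*$.
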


\begin{proof} 
    It is straightforward from the description of $\cA$ that the $L$-intervals created are disjoint.

    If a tiny-only-slot or honest covered slot is to the left of the last $L$-interval created by $\cA$ or $\cA$ has not created $L$-intervals yet,
    then a new $L$-interval will be created. 
    From the description of $\cA$ when an $L$-interval $I$ is created, there cannot be tiny-only-slots or honest covered slots between this interval
    and the $L$-interval to the right of $I$, or to the right of $I$ if $I$ is the first $L$-interval created by $\cA$.
    Moreover, once $I$ is created, the deceptive covered slots after the right endpoint of $I$ do not become honest, since all the active blocks 
    that become passive from now on are in $I$ or to the left of $I$. Thus 
        after $\cA$ finishes executing, any tiny-only-slot or any honest covered-slot is inside an $L$-interval.
    
    For the last statement, it suffices to ensure that $t^*\in [t_1+1,d_1-1]$ cannot happen where $t^*$ is an endpoint time of a $L$-interval and $[t_1,d_1]$ is a block.  
    Note that the rightmost timeslot of a $L$-interval is a critical timeslot, which covers the case when $t^*$ is a right endpoint time. 
    The choice of the leftmost timeslot of a $L$-interval in algorithm $\cA$ covers the case when $t^*$ is a left endpoint time. 
\end{proof}

\begin{claim} \label{claim:active_chain_1}
    Suppose $j_q, j_{q-1}, \ldots, j_1$ is an active chain whose right end 
    is active block $[t'_1,d_1]$. 
    We have $d(j_x) \leq d_1$ for each $x = 1,2,\ldots,q$. 
\end{claim}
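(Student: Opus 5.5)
The plan is to prove the bound by induction along the chain, going from $j_1$ (the job inside the block) towards $j_q$. The key fact is that each EDF choice at a covered-slot picks a job whose deadline is no later than that of the next job in the chain.

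\emph{Base case $x=1$.} Job $j_1$ is scheduled in a timeslot of the active block $[t'_1,d_1]$. By Claim \ref{claim:block_batch}, every huge job scheduled inside a block belongs to some batch $B$, and $I(B) \subseteq [t'_1,d_1]$ because blocks are the maximal intervals of the union of the $I(B)$. A batch formed with parameter $\delta$ consists only of huge jobs with deadline at most $\delta$, and $I(B)$ ends at $\delta$. Hence $d(j_1) \le \delta \le d_1$.

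\emph{Inductive step.} Assume $d(j_{x-1}) \le d_1$ for some $x \ge 2$. Job $j_x$ was placed by EDF in the covered-slot $[t_x,t_x+1]$, at the moment the algorithm opened that slot because a tiny job reached its starting deadline. I claim $j_{x-1}$ was a candidate for EDF at that moment, which needs three facts:
\begin{itemize}
\item $j_{x-1}$ had arrived: the chain definition gives $a(j_{x-1}) \le t_x$. In the online setting this also means $j_{x-1}$ was already known, i.e.\ it lies in $\hY(t_x)$, and in $Y(t_x)$ for $G$.
\item $j_{x-1}$ was still unscheduled: it is scheduled at $t_{x-1}$, and $t_x < t_{x-1}$. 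Since huge jobs are unit jobs after splitting, it had not yet been placed elsewhere.
\item Its window contains the slot: $d(j_{x-1}) \ge t_{x-1}+1 > t_x+1$, so $[t_x,t_x+1] \subseteq W_{j_{x-1}}$.
\end{itemize}
EDF selects a huge job of minimum deadline among the available ones, so $d(j_x) \le d(j_{x-1}) \le d_1$, regardless of how ties are broken.

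The only delicate point is the availability of $j_{x-1}$ at time $t_x$, particularly for $\hG$, where "available" must mean arrived and not yet scheduled. The chain conditions $a(j_{x-1}) \le t_x < t_{x-1}$ are exactly what this needs. One also has to make sure that no batch step scheduled $j_{x-1}$ earlier in the same timeslot. This follows because batches are checked before the EDF step at time $t_x$, yet $j_{x-1}$ is in fact scheduled later, at $t_{x-1}$. Since $x$ was arbitrary, $d(j_x) \le d_1$ holds for all $x = 1, \ldots, q$.
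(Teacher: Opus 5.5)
Your proof is correct and follows essentially the same route as the paper. The base case places $j_1$ in a batch via Claim~\ref{claim:block_batch}, giving $d(j_1)\le \max I(B)\le d_1$. The inductive step shows that the chain conditions make the previous chain job available, unscheduled, and eligible at the earlier covered-slot, so the EDF rule gives $d(j_x)\le d(j_{x-1})\le d_1$.
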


\begin{proof}
    Note that $j_1$ is scheduled within the active block $[t'_1,d_1]$. 
    Claim \ref{claim:block_batch} says $j_1$ is in some batch $B$ scheduled by algorithm $G$ (or $\hG$). 
    By definition of a batch, we have $d(j_1) \leq \max I(B) \leq d_1$ where $\max I(B)$ is the right endpoint of the interval $I(B)$. 
    Next, take any $x = 1,2,\ldots,q-1$. 
    We show that $d(j_x) \leq d_1$ implies $d(j_{x+1}) \leq d_1$. 
    Suppose $j_x$ covers timeslot $[t_x,t_x+1]$ and $j_{x+1}$ covers timeslot $[t_{x+1}, t_{x+1}+1]$. 
    By the definition of active chain, we have $t_{x+1} < t_x$ and $a(j_x) \leq t_{x+1}$. 
    It follows that job $j_x$ is available to be scheduled within timeslot $[t_{x+1}, t_{x+1}+1]$ at time $t_{x+1}$, because $a(j_x) \leq t_{x+1} < t_{x+1}+1 \leq t_x < t_x + 1 \leq d(j_x)$ and the fact that $j_x$ is scheduled within $[t_{x},t_{x}+1]$ implies $j_x$ has not been scheduled at time $t_{x+1}$. 
    It follows from the definition of EDF and the induction hypothesis that $d(j_{x+1}) \leq d(j_x) \leq d_1$. 
\end{proof}

\begin{claim} \label{claim:active_chain_2}
    At the end of $\cA$, if an active chain connects covered-slot $[t^*,t^*+1]$ to active block $[t'_1,d_1]$, then for any open timeslot $[t,t+1]$ such that $t^* \leq t \leq t'_1-1$, either $[t,t+1]$ is in some active block, or, deceptive covered. 
\end{claim}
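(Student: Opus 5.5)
The plan is to classify every open timeslot $[t,t+1]$ with $t^*\le t\le t'_1-1$, using the fact that an open timeslot is either covered by a batch (so it lies in a block) or is critical (so it is a tiny-only-slot or a covered-slot). It then suffices to prove three things: no such $t$ is a tiny-only-slot, every covered-slot among them is deceptive, and every block meeting the range is active. The tool throughout is the chain $j_q,\ldots,j_1$, with $t_{q+1}=t^*<t_q<\cdots<t_2<t_1$ and $t_1\ge t'_1$. The half-open ranges $[t_{x+1},t_x)$ for $x=1,\ldots,q$ tile $[t^*,t_1)$, so every $t$ in our range lies in exactly one of them, say $t_{x+1}\le t<t_x$.

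\emph{Availability step.} First I show that the job $j_x$ is available and unscheduled at time $t$. We have $a(j_x)\le t_{x+1}\le t$. The job $j_x$ is placed in $[t_x,t_x+1]$ with $t_x>t$, and unit huge jobs are placed only once, so $j_x$ has not been scheduled before time $t$. Finally $d(j_x)\ge t_x+1>t+1$. Claim \ref{claim:active_chain_1} adds that $d(j_x)\le d_1$.

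\emph{Tiny-only and covered slots.} Suppose $[t,t+1]$ is a tiny-only-slot. At time $t$ the algorithm opened this timeslot because a tiny job reached its starting deadline. It then brings a huge job by EDF whenever one is available, and $j_x$ is available. So $[t,t+1]$ would be a covered-slot, a contradiction. Now suppose $[t,t+1]$ is a covered-slot. If $t=t^*$, it is deceptive by hypothesis. If $t>t^*$, consider the sequence $j_x,j_{x-1},\ldots,j_1$ with the new left endpoint $t$ in place of $t_{x+1}$. The required inequalities hold: $a(j_x)\le t<t_x$ by the availability step, and the remaining inequalities are inherited from the original chain. This sequence is therefore an active chain connecting $[t,t+1]$ to the same active block, so $[t,t+1]$ is deceptive. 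All these notions are evaluated at the end of $\cA$, and the block $[t'_1,d_1]$ is still active then, so the conclusion holds at that time.

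\emph{Blocks.} Suppose $[t,t+1]$ lies in a block $B$ that is passive. By Claim \ref{claim:cA_results}, $B$ is inside some $L$-interval $I$ with right endpoint $t_r+1$, where $[t_r,t_r+1]$ is a tiny-only-slot or an honest covered-slot. By construction, $L$-intervals either contain a block or are disjoint from it, and the active block $[t'_1,d_1]$ is outside every $L$-interval. Since $I$ reaches down to $t<t'_1$, it must end at or before $t'_1$, giving $t\le t_r\le t'_1-1$. Then $[t_r,t_r+1]$ is an open slot in our range that is tiny-only or honest covered, which contradicts the two cases already handled. Hence $B$ is active.

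The main obstacle I expect is the covered-slot case, specifically confirming that the truncated chain meets the definition exactly. This requires $j_x$ to be placed by EDF in a covered timeslot for $x\ge2$, or in the block for $x=1$. It also requires checking that the strict inequality $t<t_x$ and the bound $a(j_x)\le t$ survive the truncation. Both follow from the availability step. A smaller issue is the timing of ``deceptive'' and ``honest'': I will make sure every notion is evaluated at the end of $\cA$.
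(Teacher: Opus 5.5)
Your proposal is correct and follows essentially the same argument as the paper. You place $t$ in some $[t_{x+1},t_x)$ so that $j_x$ is available at time $t$, rule out tiny-only slots, make covered slots deceptive via the truncated chain $j_x,\ldots,j_1$, and refute a passive block using the rightmost (critical) timeslot $[t_r,t_r+1]$ of its $L$-interval, which lies in $[t^*,t'_1-1]$; on the timing issue you flag, the one fact needed (also implicit in the paper) is that a slot honest when $\cA$ picks it stays honest at the end, since active blocks, and hence active chains, only disappear as $\cA$ proceeds.
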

\begin{proof}
    Suppose $j_q, j_{q-1}, \ldots, j_1$ connects the covered-slot $[t^*,t^*+1]$ to active block $[t'_1,d_1]$. 
    Suppose $j_x$ covers timeslot $[t_x,t_x+1]$ for each $x = 1,2,\ldots,q$ and $t_{q+1}:=t^*$. 
    By the definition of active chain, we have
    $t_{q+1} < t_q < \cdots < t_2 < t_1$, and also $t_2 < t'_1 \leq t_1 < d_1$.
    Take $q^*$ such that $t_{q^*+1} \leq t < t_{q^*}$. 
    It is easy to see that $q^*\in \{1,2,\ldots,q\}$. 
    It follows from the definition of active chain that $a(j_{q^*}) \leq t_{q^*+1} \leq t < t_{q^*} < t_{q^*}+1 \leq d(j_{q^*})$
    and hence job $j_{q^*}$ is available to be scheduled at time $t$. 
    It follows that timeslot $[t,t+1]$ is not tiny-only. 
    Case 1: timeslot $[t,t+1]$ is not within any block. 
    Then, $[t,t+1]$ is a covered-slot. 
    It follows that timeslot $[t,t+1]$ is deceptive because jobs $j_{q^*}, \ldots, j_2,j_1$ 
    form  an active chain that connects $[t,t+1]$ to the active block $[t'_1,d_1]$.     
    Case 2: timeslot $[t,t+1]$ is within a block. Suppose (for a contradiction) that this block is passive.
    Then this block is contained in an $L$-interval $I$. 
    As the block $[t'_1,d_1]$  is active, it must be to the right of $I$.
    The rightmost timeslot of $I$ is a tiny-only-slot or an honest covered timeslot;
    however Case 1 shows that this timeslot (which is not in a block) is a deceptive covered slot, a contradiction.
\end{proof}

\begin{claim}
\label{c_pIj}
    Given an $L$-interval $I$ with attached job $j$,
    $p_I(j)$ is the number of timeslots that $j$ is assigned (by $G$ or $\hG$) to be processed
    inside $I$ and is equal to the number of open timeslots during $I$.
\end{claim}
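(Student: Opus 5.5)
We have to show three quantities coincide for an $L$-interval $I=[s,t_r+1]$ with attached critical tiny job $j$: $p_I(j)$, the number $n_j$ of timeslots of $I$ to which the algorithm assigns $j$, and the number $n_{open}$ of open timeslots in $I$. The plan is to prove $p_I(j)\le n_j\le n_{open}\le p_I(j)$.

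The first inequality is immediate. The algorithm gives $j$ exactly $p(j)$ timeslots inside $W_j$, and at most $L(W_j\setminus I)$ of them lie outside $I$. Hence $n_j\ge p(j)-L(W_j\setminus I)$, and also $n_j\ge 0$.

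For $n_j\le n_{open}$ and $n_{open}\le n_j$, I will show that $j$ is scheduled in every open timeslot of $I$. I will also show that $j$ is assigned to every timeslot of $W_j$ lying to the right of $I$.

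First consider the right part. Since $j$ reaches its starting deadline at $t_r$, it is assigned to every timeslot in $[t_r,d(j)]$. So inside $I$ it occupies the last slot $[t_r,t_r+1]$, and to the right of $I$ it occupies all of $W_j\cap[t_r+1,d(j)]$.

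Next consider the left part. By construction of $\cA$, either $s=a(j)$, or $s$ is the right end of a block containing $a(j)$. In the second case the slots of $W_j$ to the left of $s$ all lie in that block. They are open (covered by batches), and by the second rule of the algorithm every available tiny job, $j$ included, is scheduled there. So $j$ is assigned to all of $W_j\cap[a(j),s]$, which is every timeslot of $W_j$ outside $I$ on the left.

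Now take an open timeslot $[t,t+1]$ with $s\le t<t_r$. Here $j$ has arrived, since $a(j)\le s\le t$. It has not finished, since it still needs $[t_r,d(j)]$. The algorithm, whenever it opens a timeslot (by batch, by EDF huge job, or by a critical tiny job), schedules all available tiny jobs in it. Hence $j$ is in $[t,t+1]$, so every open slot in $I$ carries $j$. Conversely any slot carrying $j$ is open, which gives $n_j=n_{open}$.

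It remains to get $n_j\le p_I(j)$. Outside $I$, $j$ occupies every slot of $W_j\setminus I$: on the right by the starting-deadline argument, and on the left by the block argument (or vacuously when $s=a(j)$). Therefore $p(j)=n_j+L(W_j\setminus I)$, so $n_j=p(j)-L(W_j\setminus I)=p_I(j)$, and this value is nonnegative.

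The main obstacle is making rigorous that $j$ is really placed in every block slot left of $s$ within $W_j$ and in every open slot of $I$. The point is that $j$ is never \emph{finished early}. That holds because its remaining demand at $t_r$ equals $d(j)-t_r$, so at every earlier time it still had work left and was available. I would phrase this via the invariant that available tiny jobs are added to each open timeslot (rules two and three) until their processing completes.
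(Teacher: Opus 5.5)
Your proposal is correct and follows essentially the same route as the paper's proof. Both show that $j$ occupies every slot of $W_j\setminus I$: on the right by the starting-deadline property, and on the left because $[a(j),s]$ lies inside a block while $j$ is still unfinished. Both also show that $j$ lies in every open slot of $I$, since it is unfinished and every open slot receives all available tiny jobs. This gives $p(j)=n_j+L(W_j\setminus I)$ directly, so your opening chain of inequalities is superfluous.
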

\begin{proof} 
    Suppose $I = [I^-, I^+]$. 
    First, we show that $G$ (or $\hG$) schedules job $j$ at each timeslot during $W_j\setminus I = [a(j), I^-] \cup [I^+, d(j)]$. 
    Looking at $I^+$, by definition, $j$ reaches its starting deadline at timeslot $[I^+-1, I^+]$. 
    It follows that $j$ is scheduled at each timeslot during $[I^+, d(j)]$. 
    It suffices to consider the case when $a(j) < I^-$. 
    By the choice of $I^-$, there is a block containing $[a(j), I^-]$. 
    It follows from the definition of Algorithm $G$ (or $\hG$) and the fact that $j$'s processing is not completed at time $I^-$ that $j$ is scheduled at each timeslot during $[a(j), I^-]$. 
    Second, we show that any open timeslot during $I$ has $j$ scheduled in it. 
    This follows from the definition of Algorithm $G$ (or $\hG$)
    and the fact that $j$'s processing is not completed at time $I^+-1$. 
\end{proof}

Let $I_1, \ldots, I_u$ be the $L$-intervals, from right to left, constructed by Algorithm $\cA$ from the schedule of $G$ (or $\hG$). 
Let $H^*$ denote the set $\{j \; | \; j \mbox{ huge and }   L(W_j \cap \left( \cup_{l = 1,2,\ldots,u} I_l\right)) \geq 1 \}$. 
\begin{claim}
\label{c_inside}
    In the scheduling of $G$ (or $\hG$), any huge job from the set $H^*$ is scheduled within $\left( \cup_{l = 1,2,\ldots,u} I_l\right)$.
\end{claim}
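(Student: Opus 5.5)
We argue by contradiction. Take a huge job $j\in H^*$, so $W_j$ meets some $L$-interval $I_l=[I^-,I^+]$ in at least one timeslot, and suppose $G$ (or $\hG$) schedules $j$ in a timeslot $[t,t+1]$ outside $\cup_l I_l$. By Claim \ref{claim:block_batch}, $j$ lies either in a block or on a covered-slot outside every block. If $j$ is in a block, that block is disjoint from $\cup_l I_l$, so by Claim \ref{claim:cA_results} it is active. If $j$ is on a covered-slot outside all $L$-intervals, that slot is deceptive, again by Claim \ref{claim:cA_results}. So in both cases $j$ ends an active chain, or lies in an active block. We then split according to whether $[t,t+1]$ is to the right or to the left of the interval $I_l$ that $W_j$ meets.

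\emph{Case $t\ge I^+$.} Here $a(j)\le I^+-1<t$, so $j$ was available at time $I^+-1$. The rightmost timeslot $[I^+-1,I^+]$ of $I_l$ is critical, and it is either tiny-only or an honest covered-slot. It cannot be tiny-only, since an available huge job would have been brought in by EDF. So it is covered by some huge job $j'$ chosen by EDF while $j$ was available, and EDF gives $d(j')\le d(j)$. We then aim to show that $[I^+-1,I^+]$ is deceptive, contradicting that it is honest. If $j$ lies in an active block $B$, then $j$ itself is an active chain from $[I^+-1,I^+]$ to $B$, because $a(j)\le I^+-1<t$. If $j$ lies on a deceptive covered-slot, prepend $j$'s own chain: its chain $j,\dots$ reaches an active block, and $a(j)\le I^+-1<t$ makes the extended sequence satisfy the chain definition. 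One subtlety is whether a chain must start with the job covering the slot $[t^*,t^*+1]$. By the definition, $j_q$ covers $t_q>t^*$ and only needs $a(j_q)\le t^*$, so the job covering $t^*$ is not part of the chain, and the argument goes through. It also needs the fact that $j$'s block stays active to the end of $\cA$, and this follows from Claim \ref{claim:cA_results}.

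\emph{Case $t+1\le I^-$.} Now $d(j)\ge I^-+1>t+1$, so $W_j$ extends past $I^-$. If $j$ lies in a block $B$, the left endpoint $I^-$ was chosen either as the arrival time $a(j^\circ)$ of the attached tiny job $j^\circ$, or as the right end of a block containing $a(j^\circ)$. Claim \ref{claim:active_chain_1} together with the batch definition gives $d(j)\le$ right end of $B$. Since $B$ lies entirely to the left of $I^-$, this gives $d(j)\le I^-$, a contradiction. If instead $j$ is on a deceptive covered-slot $[t,t+1]$ to the left, its active chain ends at an active block to the right. By Claim \ref{claim:active_chain_2}, every open slot between $t$ and that block is in an active block or is deceptive. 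We want the block to lie left of $I^-$, using Claim \ref{claim:active_chain_1} ($d(j)\le d_1$) to bound the deadline by $I^-$. The chain cannot cross $I_l$: the open rightmost slot of $I_l$ would then be deceptive or inside an active block, contradicting that it is honest or tiny-only, or that $I_l$ contains no active block. Hence the block is left of $I^-$, so $d(j)\le d_1\le I^-$, contradicting $d(j)\ge I^-+1$.

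I expect the main obstacle to be the right-side case, which needs a careful argument that EDF's choice at $[I^+-1,I^+]$ makes that slot deceptive, including the boundary case $t=I^+-1$ itself, which is inside $I_l$. I would handle this first by checking the chain definition's inequalities $a(j_x)\le t_{x+1}<t_x$ exactly.
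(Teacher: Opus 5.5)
Your proof follows the paper's argument, with the two case splits done in the opposite order. The paper splits first into ``active block'' versus ``deceptive covered-slot'' and then by the side on which $W_j$ reaches an $L$-interval; you split by side first. The ingredients are the same. If the window reaches left, the one-job or extended active chain makes the rightmost slot of that $L$-interval deceptive. If it reaches right, Claims~\ref{claim:cA_results}, \ref{claim:active_chain_1} and \ref{claim:active_chain_2} bound the deadline.

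One step needs repair. Consider the case where $[t,t+1]$ is a deceptive covered-slot and $W_j$ reaches an $L$-interval to its right. There you cite Claim~\ref{claim:active_chain_1} for $d(j)\le d_1$. That claim only bounds the chain jobs $j_q,\dots,j_1$, and, as you yourself noted, the job $j$ covering $[t,t+1]$ is not part of the chain. You need the extra EDF comparison that the paper makes explicit. Since $a(j_q)\le t<t_q$, the job $j_q$ was available and still unscheduled when EDF chose $j$ for $[t,t+1]$. Hence $d(j)\le d(j_q)\le d_1$.

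Also, the boundary case $t=I^+-1$ that you flag cannot occur, because $[t,t+1]$ lies outside every $L$-interval by assumption. The only slot of $I_l$ involved is the new endpoint of the extended chain, and there the definition requires only $a(j)\le I^+-1<t$.
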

\begin{proof} 

    \begin{figure}
        \centering
        \includegraphics[width=0.8\linewidth]{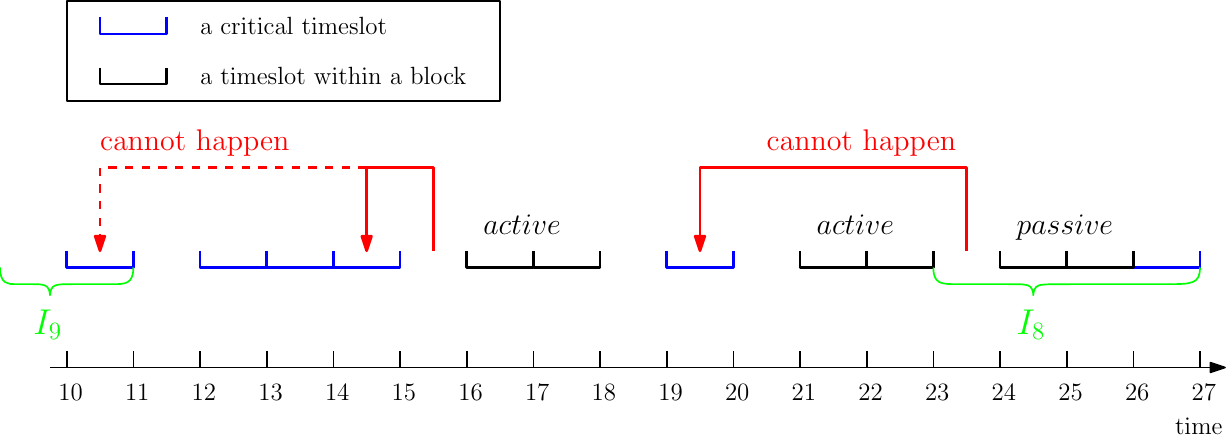}
        \caption{Consider a huge job that covers timeslot $[14,15]$ and Claim \ref{c_inside} says the huge job cannot arrive before $I_9$ ends. Consider a huge job that covers timeslot $[19,20]$ and Claim \ref{c_inside} says the huge job cannot have deadline after $I_8$ starts. }
        \label{fig:active_chain_L_intervals}
    \end{figure}

    Take any huge job ${j^*}\in H^*$.
    For the sake of contradiction, assume ${j^*}$ is scheduled within timeslot $[t^*,t^*+1]$ outside $\left( \cup_l I_l\right)$, say $\max I_{l+1} \leq t^* < \min I_{l}$, where $\min I_0:=\infty$ and $\max I_{u+1} = -\infty$. 
    Claim \ref{claim:cA_results} says $[t^*,t^*+1]$ is in an active block or a deceptive covered-slot. 

    Consider the former case. 
    Suppose $j^*$ is scheduled within an active block $[t_1,d_1]$. 
    It follows that $j^*$ is scheduled in a batch,
    and from the maximality of the block, we obtain $d(j^*) \leq d_1$. 
    On the other hand, we have $d_1 \leq \min I_l$ because the active block is disjoint from any $L$-interval (from Claim \ref{claim:cA_results}). 
    It follows that $W_{j^*}$ cannot intersect $I_l$ (if exists). 
    Suppose (for a contradiction) $W_{j^*}$ intersects with $I_{l+1}$. 
    Note that the rightmost timeslot within $I_{l+1}$ (if $I_{l+1}$  exists), say $[t^-, t^-+1]$, is tiny-only or honest covered, by the definition of $\cA$. 
    However,  $[t^-, t^-+1]$ cannot be tiny-only,
    since $j^*$ was available to cover it.
    And if $[t^-, t^-+1]$  is covered,
    the single job $j^*$ is an active chain that connects the critical timeslot $[t^-, t^-+1]$
    to the active block $[t_1,d_1]$, which says $[t^-, t^-+1]$ is a deceptive covered-slot (contradiction). 

    Consider the latter case when $[t^*,t^*+1]$ is a deceptive covered-slot. 
    See Figure \ref{fig:active_chain_L_intervals} for an example.  
    Suppose $j_q, j_{q-1}, \ldots, j_1$ form the active chain that connects covered-slot $[t^*,t^*+1]$ to an active block $[t_1,d_1]$. 

    Here, we show that $W_{j^*}$ does not intersect $I_{l+1}$ (if exists). 
    Let timeslot $[t^-,t^-+1]$ denote the rightmost timeslot within $I_{l+1}$. 
    Similar to the former case, the definition of algorithm $\cA$ says timeslot $[t^-,t^-+1]$ is either tiny-only or honest covered. 
    Assume the intersection, i.e., $a(j^*) \leq t^-$ for a contradiction. 
    Since $j^*$ is available to be scheduled at time $t^-$, timeslot $[t^-, t^-+1]$ must be covered. 
    It follows that $j^*,j_q, j_{q-1}, \ldots, j_1$ form an active chain that connects timeslot $[t^-,t^-+1]$ to an active block $[t_1,d_1]$ (one may check the conditions of being an active chain). 
    The definition of an active chain says timeslot $[t^-,t^-+1]$ is deceptive covered (contradiction). 

    Next, we show $W_{j^*}$ does not intersect $I_{l}$ (if exists). 
    Let timeslot $[t^+,t^++1]$ denote the rightmost timeslot within $I_{l}$, which is either tiny-only or honest covered. 
    We have $d_1 \leq \min I_l$, otherwise Claim \ref{claim:cA_results} says $\max I_l \leq t_1$ and then Claim \ref{claim:active_chain_2} says timeslot $[t^+,t^++1]$ would be deceptive (a contradiction). 
    Claim \ref{claim:active_chain_1} gives $d(j_q) \leq d_1$. 
    However, $d(j^*) \leq d(j_q)$ since $j^*$ was used to cover the
    timeslot $[t^*,t^* + 1]$
    that $j_q$ can also cover, and therefore $d(j^*) \leq d_1$.
    Hence $W_{j^*}$ does not intersect $I_l$.     
\end{proof}

\begin{claim} \label{claim:alg_H^*}
    In the scheduling of $G$ (or $\hG$), $H^*$ covers exactly
    $\left( \sum_x b_x \right) + |H^*\setminus \left( \cup_x H_x\right)|$ 
    (number of) timeslots, where $H_x$ is the set of huge jobs scheduled within 
    the $x$-th passive block and $b_x$ is the length of the block. 
\end{claim}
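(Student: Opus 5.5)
We need to count the timeslots covered by $H^*$ in the algorithm's schedule. By Claim \ref{c_inside}, every job of $H^*$ is scheduled inside $\cup_l I_l$. So the plan is to split the timeslots inside $\cup_l I_l$ that receive a job of $H^*$ into two kinds: those inside passive blocks, and those outside every block.

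\textbf{Step 1: passive blocks are fully covered by $H^*$.} Take the $x$-th passive block, of length $b_x$. Every one of its $b_x$ timeslots is covered by some batch, hence by some job of $H_x$. Each such job has its window meeting the block, and the block lies inside an $L$-interval (Claim \ref{claim:cA_results}). Hence $H_x \subseteq H^*$, and the block contributes exactly $b_x$ covered timeslots. Blocks are disjoint, so the total contribution is $\sum_x b_x$.

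\textbf{Step 2: jobs outside the passive blocks cover distinct, non-overlapping slots.} Consider a job $j \in H^* \setminus \cup_x H_x$. By Claim \ref{c_inside} it is scheduled inside $\cup_l I_l$. By Claim \ref{claim:cA_results}, active blocks lie outside $\cup_l I_l$, so $j$ is not in any block. By Claim \ref{claim:block_batch}, $j$ then covers a covered-slot, placed there by EDF.

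A covered-slot holds exactly one huge job, because the EDF step brings at most one huge job into a critical timeslot. Such a slot is not in any block, by the definition of a critical timeslot. Distinct jobs of this kind therefore occupy distinct timeslots, and none of these slots meets a passive block.

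\textbf{Step 3: nothing else is counted.} No huge job of $H^*$ sits in a timeslot outside $\cup_l I_l$. Inside $\cup_l I_l$, every slot holding a huge job is either in a block, which must be passive because it meets an $L$-interval, or is a covered-slot. Adding the two disjoint contributions gives exactly
\[
\left( \sum_x b_x \right) + \left|H^*\setminus \left( \cup_x H_x\right)\right|
\]
covered timeslots.

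\textbf{Main obstacle.} The delicate point is Step 1's inclusion $H_x \subseteq H^*$. We must also make sure that a job scheduled in a passive block is never counted again as being outside the blocks. This holds because each huge unit job is scheduled exactly once.

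A second point to verify is that a covered-slot holds exactly one huge job. The batch step runs first, and a covered-slot is by definition not covered by any batch. So only the single EDF job lands there, which is what Step 2 needs.
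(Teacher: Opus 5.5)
Your proposal is correct and follows essentially the same argument as the paper: $H_x\subseteq H^*$ because passive blocks lie inside $L$-intervals, and every job of $H^*\setminus(\cup_x H_x)$ lies, by Claim~\ref{c_inside}, alone in a critical covered-slot outside all blocks, so the two contributions add up exactly. You are more explicit than the paper, which leaves implicit why such a job cannot sit in an active block; you justify this via Claims~\ref{claim:cA_results} and~\ref{claim:block_batch}.
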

\begin{proof} 
    Note that every huge job in every $H_x$ is in $H^*$ because passive blocks are included in
    $L$-intervals.
    Claim \ref{c_inside} says that each huge job in $H^*\setminus \left( \cup_x H_x\right)$ is scheduled by EDF at some critical timeslot during $\cup_{l = 1,2,\ldots,u} I_l$. 
    Since each critical timeslot accommodates at most one huge job in our algorithm's scheduling, it follows that $H^*$ covers exactly $\left( \sum_x b_x \right) + |H^* \setminus \left( \cup_x H_x\right)|$ number of timeslots in the algorithm's scheduling, where the first term and second term correspond to the passive blocks and the critical timeslots respectively. 

\end{proof}

\begin{theorem}
\label{thm: G_opt}
    $G$ computes an optimum schedule.
\end{theorem}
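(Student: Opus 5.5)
Since the busy time of any schedule equals $h$ plus the number of open timeslots with no huge job, it suffices to show that the number of tiny-only-slots produced by $G$ is at most $s^*$. I will do this with lower bound (\ref{lb_hp'}), instantiated with the $L$-intervals $I_1,\ldots,I_u$ built by $\cA$ from $G$'s schedule and their attached critical tiny jobs $j_1,\ldots,j_u$. Claim \ref{claim:cA_results} guarantees that these intervals are disjoint, so (\ref{lb_hp'}) applies and gives
\[
s^* \geq \sum_{l} p_{I_l}(j_l) - maxcoverage(H^*).
\]

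For the first term I use Claim \ref{c_pIj}: $\sum_l p_{I_l}(j_l)$ equals the number of timeslots opened by $G$ inside $\cup_l I_l$. Inside the $L$-intervals, every open timeslot falls into exactly one of three kinds. It lies in a passive block; by Claim \ref{claim:cA_results} every block is either inside or outside the union, and only passive ones are inside. Otherwise it is a covered-slot or a tiny-only-slot. By Claim \ref{claim:cA_results}, every tiny-only-slot of $G$ lies inside some $L$-interval, and so do the honest covered-slots. Deceptive covered-slots may also lie inside. Using Claims \ref{c_inside} and \ref{claim:block_batch}, every covered-slot inside the union holds a huge job from $H^*\setminus\cup_x H_x$. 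Conversely, every job of $H^*\setminus\cup_x H_x$ occupies a distinct covered-slot inside the union, since each critical slot holds at most one huge job. Hence the number of open slots in $\cup_l I_l$ equals
\[
\sum_x b_x + |H^*\setminus \cup_x H_x| + (\text{number of tiny-only-slots of } G).
\]

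For the second term I bound $maxcoverage(H^*)$ by subadditivity, splitting $H^*$ into the sets $H_x$ and the leftover jobs:
\[
maxcoverage(H^*) \leq \sum_x maxcoverage(H_x) + |H^*\setminus\cup_x H_x|.
\]
Claim \ref{claim: G_covereage} then gives $maxcoverage(H_x)=b_x$, since each passive block is an $H$-block. Subtracting, the terms $\sum_x b_x$ and $|H^*\setminus\cup_x H_x|$ cancel. What remains is $s^*$ being at least the number of tiny-only-slots of $G$, so $G$'s busy time is at most $h+s^*=\opt$. Claim \ref{claim:alg_H^*} is essentially the matching statement that $G$ achieves this coverage exactly, and it can be cited for the bookkeeping.

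The main obstacle is the bookkeeping step: showing that every open timeslot inside the $L$-intervals that is not tiny-only and not in a passive block is a covered-slot whose huge job lies in $H^*$ and outside the passive blocks. Membership in $H^*$ is immediate because the job's window meets the interval. The subtle point is ruling out timeslots inside an $L$-interval that are covered by a batch from an active block. Claim \ref{claim:cA_results}, which says active blocks lie outside the union, handles this. A second care point is that Claim \ref{claim:block_batch} is needed to confirm that all huge jobs are actually scheduled, so that $h$ counts correctly.
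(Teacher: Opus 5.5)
Your proposal is correct and follows the paper's proof essentially step for step. You instantiate lower bound (\ref{lb_hp'}) with the $L$-intervals from $\cA$ and use Claims \ref{claim:cA_results}, \ref{c_pIj} and \ref{c_inside} to write $\sum_l p_{I_l}(j_l)$ as the number of tiny-only-slots plus $\sum_x b_x + |H^*\setminus\cup_x H_x|$ (the content of Claim \ref{claim:alg_H^*}, which you re-derive inline), then bound $maxcoverage(H^*)$ by that same quantity via subadditivity and Claim \ref{claim: G_covereage}.
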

\begin{proof}
    Let $\mu$ denote the number of tiny-only-slots in $G$'s scheduling. 
    The goal is to show $\mu \leq s^*$. 
    With the help of the lower bound (\ref{lb_hp'}),
    it suffices to show that $\mu \leq \left(\sum_{l=1}^u p_{I_l}(j_l) \right)- maxcoverage(H^*)$. 
    Let $\eta$ denote the number of timeslots $H^*$ covers (all within $\left( \cup_{l = 1,2,\ldots,u} I_l\right)$, by Claim \ref{c_inside}) in $G$'s scheduling. 
    By Claim \ref{claim:alg_H^*}, we have $\eta = \left( \sum_x b_x \right) + |H^*\setminus \left( \cup_x H_x\right)|$. 
    By Claim \ref{claim:cA_results} and \ref{c_pIj}, we have
    $\mu + \eta = \sum_{l = 1}^u p_{I_l}(j_l)$. 
    With the help of the lower bound (\ref{lb_hp'}),
    it suffices to show that $\eta \geq maxcoverage(H^*)$. 
    It suffices to show $maxcoverage(H^*) \leq \left( \sum_x b_x \right) + |H^*\setminus \left( \cup_x H_x\right)|$. 
    Take any schedule $C$ of $H^*$. 
    Claim \ref{claim: G_covereage} says $H_x$ covers at most $b_x$ timeslots in $C$. 
    It follows that $H^*$ covers at most $\left( \sum_x b_x \right) + |H^*\setminus \left( \cup_x H_x\right)|$ timeslots in $C$. 
\end{proof}

\begin{theorem}
\label{thm: hG}
    $\hG$ produces a schedule with at most $s^* + h/2$ tiny-only-slots.
\end{theorem}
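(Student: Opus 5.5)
We follow the proof of Theorem \ref{thm: G_opt} almost verbatim, replacing the exact coverage bound Claim \ref{claim: G_covereage} by the weaker Claim \ref{claim: hG_covereage}, and paying for the difference with $h/2$.

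Let $\mu$ be the number of tiny-only-slots in $\hG$'s schedule. Run algorithm $\cA$ on $\hG$'s schedule, using $\hH$-blocks, to obtain $L$-intervals $I_1,\ldots,I_u$ with attached jobs $j_1,\ldots,j_u$. Let $H^*$ be defined as before. The following claims were stated for both $G$ and $\hG$:
\begin{itemize}
\item Claims \ref{claim:cA_results} and \ref{c_pIj} give $\mu+\eta=\sum_l p_{I_l}(j_l)$, where $\eta$ is the number of timeslots covered by $H^*$ in $\hG$'s schedule.
\item Claims \ref{c_inside} and \ref{claim:alg_H^*} give $\eta=\sum_x b_x+|H^*\setminus \cup_x H_x|$. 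Here $H_x$ is the set of huge jobs in the $x$-th passive $\hH$-block, and $b_x$ is its length.
\end{itemize}
Before using Claim \ref{claim:cA_results} and the active-chain claims, I will double-check that their proofs invoke only Claim \ref{claim:block_batch} and the EDF/batch mechanics, which hold for $\hG$. They should not rely on Claim \ref{claim: any_batch_works} in the form that fails online.

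Lower bound (\ref{lb_hp'}) gives $s^*\ge \sum_l p_{I_l}(j_l)-maxcoverage(H^*)$. Hence
\[
\mu \le s^* + maxcoverage(H^*) - \eta .
\]
It therefore suffices to show $maxcoverage(H^*)-\eta\le h/2$. Fix any schedule $C$ of $H^*$. By Claim \ref{claim: hG_covereage}, the jobs in $H_x$ cover at most $2b_x$ timeslots in $C$. Each remaining job covers at most one. So
\[
maxcoverage(H^*)\le \sum_x 2b_x + |H^*\setminus\cup_x H_x|,
\]
and thus $maxcoverage(H^*)-\eta\le \sum_x b_x$.

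It remains to bound $\sum_x b_x$ by $h/2$; this is the main obstacle. The trivial bound is $b_x\le |H_x|$, since a block is fully covered by its own huge jobs. That gives only $\sum_x b_x\le h$, which is too weak. I would sharpen the coverage claim to
\[
maxcoverage(H_x)\le \min\{2b_x,\;|H_x|\}.
\]
The excess over $b_x$ is then at most $\min\{b_x,\,|H_x|-b_x\}\le |H_x|/2$. Summing over disjoint passive blocks gives a total excess of at most $\sum_x |H_x|/2\le h/2$, since the sets $H_x$ are disjoint subsets of the huge unit jobs.

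With this sharpened bound, the displayed inequality becomes $maxcoverage(H^*)-\eta\le h/2$, and therefore $\mu\le s^*+h/2$, as claimed.
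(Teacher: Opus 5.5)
Your proposal is correct and essentially matches the paper's proof. The paper likewise reuses $\mu+\eta=\sum_l p_{I_l}(j_l)$ and $\eta=\sum_x b_x+|H^*\setminus\cup_x H_x|$ from Theorem~\ref{thm: G_opt}, and combines Claim~\ref{claim: hG_covereage} with the trivial bound $maxcoverage(H_x)\le |H_x|$ to get $maxcoverage(H_x)\le b_x+\frac12|H_x|$; summing then gives $maxcoverage(H^*)\le \eta+h/2$, so $\mu\le s^*+h/2$. What you call a ``sharpening'' via $\min\{2b_x,|H_x|\}$ is just the conjunction of these two bounds.
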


\begin{proof}
We use the same notations as in the proof of Theorem \ref{thm: G_opt}. 
Claim \ref{claim: hG_covereage} says $maxcoverage(H_x) \leq 2\cdot b_x$ for each $x$. 
Note that we have $maxcoverage(H_x) \leq |H_x|$. 
It follows that $maxcoverage(H_x) \leq b_x + \frac{1}{2}|H_x|$. 
Let $\eta$ denote the number of timeslots $H^*$ covers
(all within $\left( \cup_{l = 1,2,\ldots,u} I_l\right)$,
by Claim \ref{c_inside}) in $\hG$'s scheduling. 
It follows that $maxcoverage(H^*) \leq \left( \sum_x maxcoverage(H_x) \right) + |H^*\setminus \left( \cup_x H_x\right)| \leq \left( \sum_x b_x + \frac{1}{2}|H_x| \right) + |H^*\setminus \left( \cup_x H_x\right)| \leq \eta + \frac{h}{2}$. 
In conclusion, we have 
$\mu = \left(\sum_l p_{I_l}(j_l) \right) - \eta \leq \left(\sum_l p_{I_l}(j_l) \right) - maxcoverage(H^*) + \frac{h}{2} \leq s^* + \frac{h}{2}$. 
\end{proof}

Note that the objective in terms of busy time is at most
$h + s^* + h/2 \leq (3/2)(h + s^*)$ and thus Algorithm $\hG$ matches the lower bound
of Theorem \ref{t_32}.

\section{Offline and Online Algorithms}
\label{s_offline&online}

 Using Algorithm $G$, solve the Huge-Tiny problem exactly, 
 where all the non-huge jobs are treated as tiny.
 The objective of the solution is $h + s^*$,
 where $s^*$ timeslots are open where no huge job is scheduled
 and recall that $h$ is the number of unit huge jobs.
 Note that this $h$ is the same as the $h$ from Lower Bound (\ref{lb_tall_jobs}).
 Run FFD in each timeslot, and let $a(i)$ be the number of resulting bins for timeslot $[i,i+1]$.
 Let $a = \sum_i a(i)$, be the total
 number of bins used, which is the objective of the busy-time problem.
 From Lemma \ref{lem: ffd}, by adding over the timeslots,
 we obtain $a \leq h + s^* + \frac{1}{18} b + \frac43 v$,
 where, as defined in Section \ref{s_prel},
 $b = \sum_{j \in  J_B \setminus J_H} p(j)$,
 and $v$ is the total volume. 
 As $\opt \geq h + s^*$, $\opt \geq \frac12 b$, and $\opt \geq v$,
 we obtain an approximation ratio of $1 + \frac19 + \frac43 = \frac{22}{9} \leq 2.444445$.

Using Online Algorithm $\hG$, obtain an approximate solution to the Huge-Tiny problem, where all the non-huge jobs are treated as tiny.
This approximate solution  has at most $s^* + h/2$ timeslots with only tiny jobs assigned,
where (as above) $s^*$  is the 
minimum number of timeslots  with  only tiny jobs assigned
in an optimum solution to the Huge-Tiny instance, 
 and recall that $h$ is the number of unit huge jobs.
 Run FFD in each timeslot, and let $a(i)$ be the number of resulting bins for timeslot $[i,i+1]$.
 Let $a = \sum_i a(i)$, be the total
 number of bins used, which is the objective of the busy-time problem.
 From Lemma \ref{lem: ffd}, by adding over the timeslots,
 we obtain $a \leq \frac{h}{2} + s^* + \frac{h}{2} + \frac32 v$, 
 where $v$ is the total volume. As $\opt \geq h + s^*$,
and $\opt \geq v$,
 we obtain a competitive ratio of $2.5$.

\subsection{Example for offline and online algorithms}
\label{a_73}
Here is an example (with no huge jobs) which shows the algorithms
based on $G$ and $\hG$ have approximation and competitive ratio respectively of at least $7/3$. 
A rigid unit job $\hat{j}_1$ with height $\epsilon$ is released at time $0$.
$2k$ flexible unit jobs with height $1/3 + \epsilon$ are released at time
$0$ and all have deadline at time $k+2$. 
A rigid unit job $\hat{j}_2$ with height $\epsilon$ is released at time $1$.
$k$ flexible unit jobs with height $1/4 + \epsilon$ are released at time $1$
and all have deadline at time $k+2$. 
A rigid job with height $\epsilon$ has release time $2$ and deadline $2 + k$. 
The 
algorithm based on $G$ (or $\hG$) opens $k$ bins during timeslot $[0,1]$ for accommodating one unit job with $\epsilon$ height and the $2k$ unit jobs with height $1/3 + \epsilon$. 
The algorithm based on $G$ (or $\hG$) opens $\left\lceil \frac{k}{3} \right\rceil$ bins during $[1,2]$ for accommodating one unit job with $\epsilon$ height and the $k$ unit jobs with height $1/4 + \epsilon$. 
Algorithm $G$ (or $\hG$) opens one bin during the $k$ timeslots in $[2,2+k]$. 
Meanwhile, the optimal scheduling opens one bin during each of the $k+2$ timeslots within $[0,2+k]$, where the bin during $[0,1]$ accommodating $\hat{j}_1$, the bin during $[1,2]$ accommodating $\hat{j}_2$, and each bin during $[2,2+k]$ accommodates a pair of flexible unit jobs with height $1/3 + \epsilon$ and a flexible unit job with height $1/4 + \epsilon$. 
The approximation ratio of the algorithm based on $G$ 
and the competitive ratio of the algorithm based on $\hG$ are at least
$\frac{k+\left\lceil \frac{k}{3} \right\rceil + k }{k+2}$ which approaches
$7/3$ as $k$ grows.



{\bf Acknowledgments:} we thank Varun Gupta and Samir Khuller  for many useful discussions related to this project.

\bibliography{references}

\appendix

\section{$3$-approximation algorithm}
\label{s_3}

Here we prove the $3$-approximation ratio of the algorithm that solves Span Minimization in the first step and then uses AnyFit for packing the unit jobs into machines for each timeslot. 
Take any open timeslot. 
If the number of machines during the taken timeslot is $1$,
then use Lower Bound (\ref{lb_span})
to cover the busy time of $1$. 
In the case that the number of machines during the taken timeslot is greater than $1$, by the definition of AnyFit, it is impossible to have the sum of a pair of bin levels upper bounded by $1$, where the bin level of a bin is the total heights of jobs accommodated within the bin. 
It follows that either all the bin levels are above $1/2$, or, some bin level is at most $1/2$ but the average bin level is still above $1/2$ after matching it with another bin level. 
It follows that the total height of jobs scheduled during the taken timeslot is at least half of the number of bins. Lower bound (\ref{lb_volume}) is used to cover the busy time during the taken timeslot which is the number of bins opened by AnyFit. 
Eventually, the total busy time (total number of bins) is at most $|Q| + 2\cdot v \leq 3\cdot \opt$.

\section{Hall's theorem}
\label{a_ps}

\begin{proof} [Proof of Claim \ref{c_hH_block} using Hall's theorem]
    Looking at the bipartite graph between $j_1,j_2,\ldots,j_{x^*}$ and timeslots within $[t_1-1, t_1 - 1 +x^*]$. Take any subset of $j_1,j_2,\ldots,j_{x^*}$, say $j_{x_1}, j_{x_2}, \ldots, j_{x_k}$ with $x_1 < x_2 < \ldots <x_k$. Note that $j_{x_k}$ can be scheduled at each timeslot during $[t_1-1, t_1 - 1 + x_k]$. We have the number of neighbors of the vertices $j_{x_1}, j_{x_2}, \ldots, j_{x_k}$ is at least $x_k \geq k$. 
    By Hall's theorem, the jobs $j_1,j_2,\ldots,j_{x^*}$ can be matched to the timeslots during $[t_1-1, t_1 - 1 +x^*]$. 
\end{proof}

\input{running_time_section}

\section{Lower bound 3/2 for online deterministic algorithms, in the uniform height setting when $g\geq 4$}
\label{s_lb}

Here we assume that the machine capacity is 1 and every
job has height $1/g$.
Release a group of jobs $J_1$ at time $0$ such that $p(j) = 1$ and the window $W_j = [0,2]$ for each $j\in J_1$ and $|J_1| = g-2$ (also $\sigma(j) = 1/g$ uniformly). At the same time, release a job $j^1$ such that $p(j^1) = 2$ and $W_{j^1} = [0,3]$.
Now, it depends on whether algorithm schedules any $j\in J_1$ during
timeslot $[0,1]$. 

If so (case 1), release a job $j^2$ at time $1$ such that $p(j^2) = 2$ and $W_{j^2} = [1,3]$. 

If not (case 2), release group of jobs $J_2$ at time $1$ such that $p(j) = 1$ and the window $W_j = [1,2]$ for each $j\in J_2$ and $|J_2| = g-2$. 

For the analysis, in case 1, the algorithm cost is
at least $3$ because each timeslot has opened one machine. 
In the optimal scheduling, one machine is opened during timeslot $[1,2]$ which accommodates $g-2$ units from $J_1$ and $1$ unit from $j^1$ and $1$ unit from $j^2$, and one machine is opened during timeslot $[2,3]$ which accommodates $j^1$ and $j^2$. 
In case 1, we have ratio of $3/2$. 
In case 2, we have two subcases. 
In the first subcase, which can only happen when $g=4$,
the algorithm opens just one machine at timeslot $[1,2]$;
in this case, the jobs of $J_1 \cup J_2$ fill this machine, and
another two machines must be opened to accommodate $j_1$.
In the second subcase, the algorithm opens at least two machines at timeslot $[1,2]$;
in this case another machine must be opened to accommodate $j_1$,
during either $[0,1]$ or $[2,3]$. 
Thus, the algorithm's cost is also at least $3$. 
The optimal scheduling will open one machine during $[0,1]$ for accommodating $J_1$ and $j^1$ and one machine during $[1,2]$ for accommodating $J_2$ and $j^1$. 
The ratio is also $3/2$. 

\end{document}

%% file: running_time_section.tex
\section{Polynomial-Time Variants of the Algorithms}
\label{s_pt}

The outline of the algorithms is given in Section \ref{s_hp}.
We will argue in this section that the timeline will be split into $O(n)$ intervals, which we call \emph{segments},
such that exactly the same set of jobs $J(I)$ are scheduled in each timeslot of segment $I$. 
For each $J(I)$, we run FFD in time $O(n \log n)$
\cite{Wang2001}, and thus the final complete schedule is obtained in $O(n^2 \log n)$,
assuming we find all these segments $I$ within this time bound. This is what we achieve below.

For each segment, we keep a (sorted) list of the jobs assigned to this segment. Sometimes the algorithm splits segments; 
in this case we simply copy these sorted lists (maybe something smarter can be done, but it will not improve our overall running time).
There will be $O(n)$ times when we split segments, and each split can be done in $O(n)$.
When we schedule a job $j$ in an interval $I$, we identify in $O(n)$ all the segments that intersect $I$ and add job $j$ to the lists of these
segments. Also, two of the segments can intersect $I$ partially; these two segments are split as discussed earlier.

To find these segments (and the jobs scheduled in it), we use a variant of the 
 algorithms from Section \ref{s_hp}.
 We proceed on the timeline from left to right, but we do not go timeslot-by-timeslot anymore.
 A job is called {\em active} if its remaining processing time is positive.
 At time $t$ we keep a set $P(t)$ of active tiny jobs with their associated \emph{remaining} processing times and deadlines,
 and a set $H(t)$ of active huge jobs with their associated \emph{remaining} processing times, arrival times, and deadlines.
 All the jobs of $P(t)$ have arrival time at $t$ or before. For $\hG$, the algorithm uses $H(t)$ to store the active huge jobs
 with arrival time $t$ or earlier.  $G$ however keeps in $H(t)$ all the active huge jobs.
 It will be an invariant of the algorithm that the schedule we have done before time $t$ can be completed to a full schedule.

 The algorithm use procedure LeftmostBatch($t$)  which, based on $H(t)$,
 returns the smallest $t' \geq t$ for which a batch exists that
 covers timeslots starting with $[t',t'+1]$.

 If we (the algorithm)  schedule all the remaining processing time  of a job $j$, we say that we \emph{fulfill} job $j$.
At time $t$, the algorithm seeks the next \emph{event},
which may change the two sets $P()$ and $H()$ and possibly mark the end of a segment and the beginning of another segment.
These events happen at time $\hat{t} \geq t$ if one of the following holds (the algorithm checks them in this order):
    \begin{enumerate}
        \item a tiny job arrives at time $\hat{t}$; 
        \item a huge job arrives at time $\hat{t}$; 
        \item a tiny job is fulfilled at time $\hat{t}$;  (the last time where the job is scheduled being $[\hat{t}-1,\hat{t}]$)
        \item a huge job is fulfilled at time $\hat{t}$;  (the last time where the job is scheduled being $[\hat{t}-1,\hat{t}]$)
        \item a batch starts at time $\hat{t}$; 
        \item timeslot $[\hat{t}, \hat{t}+1]$ is a timeslot where some tiny job reaches its starting deadline for the first time
    \end{enumerate}

It is straightforward to proceed left-to-right and take care of the events (of types 1-4)
where a job arrives or is fulfilled in time $O(n \log n)$,
whether we are in a timeslot covered by a batch, or not.
This processing includes splitting segments and running LeftmostBatch($t$) if necessary, and finding the next event.

If the next event has a batch starting at time $\hat{t}$ (event of type 5 above), then the algorithm schedules all the huge jobs that make the batch
 in a procedure we call ScheduleBatch($\hat{t}$).
To find, based on the current $H(t)$, where the next batch will start, we call a procedure we call LeftmostBatch($t$).
With effort, we will make below 
ScheduleBatch($\hat{t}$) and LeftmostBatch($t$) run in time $O(n \log n)$.
As we show below, every  ScheduleBatch($\hat{t}$) fulfills at least one huge job, and so there will be at most $O(n)$ runs of 
ScheduleBatch($\hat{t}$).

Now we discuss the event of type 6.
If a critical tiny job reaches its starting deadline at a critical timeslot, and there is no available huge job to cover that timeslot,
we create a segment of tiny-only slots, to last until the next event,
which can be computed in time $O(n \log n)$. Also within this time bound, we can find the huge job with earliest deadline that can cover this timeslot,
if such a huge job exists. This creates a segment of covered timeslots that lasts until the next event.

\subsection{LeftmostBatch($t$)}

This procedure is a compressed (segment-by-segment, and not timeslot-by-timeslot)
version of the algorithms of \cite{Glover1967MaximumMI,GALLO198431} run in reverse order. 
We would need to prove some of its properties,
as we are not interested in a maximum convex bipartite matching, but in the leftmost possible batch.
In a high level description,
LeftmostBatch starts with the rightmost deadline of a huge jobs and proceeds leftwards,
repeatedly finding a maximal interval that can
be covered by huge jobs whose deadline is no larger than the rightmost endpoint of the interval.
To find such an interval, we apply a "reverse-EDF" procedure that is a (compressed) reverse
version of the procedure used in the proof of Claim \ref{claim:EDF_works}. 

Precisely, this reverse-EDF procedure works as follows. let $H_i$ be a set of huge jobs.
We let $B_i$ to be a variable in the procedure, and it starts as being the jobs of  $H_i$ sorted by their deadline.
Let $j_i$ be a job of $H_i$ of latest deadline $R_i :=  d(j_i)$.
We have a pending set of jobs $A$ (a variable in the procedure), which initially consists of $j_i$. Remove $j_i$ from $B_i$
(it is at the end of $B_i$).
For each job in $A$, we keep the remaining processing time. 
And we use a max-heap to find from $A$ the job with {\bf latest} arrival time.
Starting from $R_i$, we proceed leftwards, until either the topmost job of the heap is fulfilled, or it reaches or exceeds its arrival time,
or we reach the deadline of the last job in $B_i$ (largest deadline in $B_i$).
If the topmost job of $A$ is fulfilled or it reaches or exceeds its arrival time, we remove it from $A$.
If we reach the deadline of job $j'$ in $B_i$, we update the remaining processing time of the topmost job in $A$, 
add $j'$ to $A$ and remove $j'$ from $B_i$. This process continues until $A$ becomes empty or we reach the time $t$.
Let $B_{i+1}$ be sorted set $B_i$ at the end of the procedure. Let $H_{i+1} = B_{i+1}$. 

If $A$ becomes empty, we have obtained an interval $Z_i$, with right endpoint $R_i$ and left endpoint  $L_i$
such that all the timeslots in $Z_i$ are covered by jobs from $H_i \setminus H_{i+1}$. 
As an aside,
similar to the proof of Claim \ref{claim:EDF_works}, one can prove that $Z_i$ is the largest interval that can be covered
by jobs from $H_i$ among the intervals ending at $R_i$ (recall that $R_i = \max_{j \in H_i} d(j)$).
Moreover, all the jobs in $B_{i+1}$ have deadline at most $L_i - 1$.
The time spent on finding $Z_i$ with its associated jobs is $O( |H_i \setminus H_{i+1}| \cdot \log n )$,
since in each iteration, either one job from $H_i \setminus H_{i+1}$ enters $A$, 
or is removed from $A$,  and each iteration involves a constant number of heap operations.

LeftmostBatch($t$) starts with $H_1 = H(t)$ and applies the reverse-EDF procedure to find $Z_1$, followed by $Z_2, \ldots, Z_i$ as described next.
If we reach $t$ while constructing $Z_i$, we have found a batch as required by the algorithms $G$ or $\hG$. The procedure
LeftmostBatch($t$)  returns $t$.

If $H_{i+1}$ is empty, then  LeftmostBatch($t$) returns $L_i$.
Otherwise, we resume applying the reverse-EDF procedure to $H_{i+1}$.
Based on the time spent on finding each $Z_i$, the overall time of 
LeftmostBatch($t$) is $O(n \log n)$  (slightly generalizing \cite{GALLO198431}).
It is an invariant of LeftmostBatch($t$) that $H_i$ contains exactly all the jobs of $H(t)$ with deadline at most $d(j_i)$.
The correctness of the procedure follows from:

\begin{claim}
    \label{c_leftmost}
    Assume that LeftmostBatch($t$) ends with $H_{i+1} = \emptyset$. Then there is no batch that can be scheduled at a time $t' < L_i$.
\end{claim}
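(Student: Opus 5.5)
The plan is to argue by contradiction, using the structure of the intervals $Z_1,\ldots,Z_i$ produced by the reverse-EDF procedure. Suppose a batch can be scheduled starting at some time $t'$ with $t \le t' < L_i$. Then there is a $\delta > t'$ and a set $J\subseteq H(t)$ of huge unit jobs, all with deadline at most $\delta$ (and at least $t'+1$), that can cover every timeslot of $[t',\delta]$. I will show that this forces the reverse-EDF procedure to continue past $L_i$, contradicting the fact that it stopped there.

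\textbf{Step 1: locating $\delta$.} Since every job has deadline at most $R_1$, we may take $\delta \le R_1$. The intervals $Z_1, Z_2, \ldots, Z_i$ are laid out right to left, with $Z_k = [L_k,R_k]$ and $R_{k+1} \le L_k - 1$. Every job of $H_{k+1}$ has deadline at most $L_k - 1$, so no deadline falls in the gap $[R_{k+1}, L_k]$ other than possibly $R_{k+1}$ itself. Hence we may shrink $\delta$ to the largest deadline of a job in $J$. That deadline is either some $R_k$ or lies strictly inside some $Z_k$.

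\textbf{Step 2: the key property.} I will use the maximality fact stated as an aside: $Z_k$ is the largest interval ending at $R_k$ that can be covered by jobs of $H_k$. I will strengthen it to the following. For any $s < L_k$, no subset of $H_k$ with deadlines at most $R_k$ can cover all of $[s, R_k]$. Indeed, such a subset would give an interval ending at $R_k$ longer than $Z_k$. Since $H_k \supseteq H_{k'}$ for $k' \ge k$, and $H_k$ consists exactly of the jobs of $H(t)$ with deadline at most $R_k$ (the stated invariant), this says: no subset of $H(t)$ with deadlines at most $R_k$ covers $[s,R_k]$ when $s<L_k$.

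\textbf{Step 3: handling a general $\delta$.} If $\delta = R_k$, Step 2 applies directly because $t' < L_i \le L_k$. If instead $\delta$ lies strictly inside $Z_k$, let $J$ cover $[t',\delta]$. The timeslots $[\delta,R_k]$ were covered in reverse-EDF by jobs of $H_k\setminus H_{k+1}$ with deadline greater than $\delta$, so these jobs are not in $J$. This is the main obstacle: I must show that these jobs together with $J$ cover $[t',R_k]$, contradicting Step 2. My plan is an exchange argument mirroring Claim \ref{claim:EDF_works} run in reverse. Reverse-EDF fills slots right to left, each time choosing the pending job with latest arrival. This greedy choice uses, on $[\delta,R_k]$, a set of jobs that is "least useful" to the left. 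Concretely, the jobs it uses on $[\delta, R_k]$ can be chosen disjoint from any fixed covering of $[t',\delta]$ by jobs with deadline at most $\delta$: such jobs cannot be scheduled right of $\delta$, and those with deadline greater than $\delta$ that reverse-EDF used are not needed by $J$. Combining the two schedules then yields a covering of $[t',R_k]$ by jobs of $H_k$ with deadlines at most $R_k$.

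\textbf{Step 4: concluding.} The combined covering from Step 3 exceeds $Z_k$ to the left, contradicting maximality in Step 2. Therefore no batch starts before $L_i$, which proves the claim. The delicate part is verifying the disjointness and combinability in Step 3. If the exchange argument resists, I would instead prove directly via Hall's theorem, in the style of Appendix \ref{a_ps}, that the union of intervals covered is the union of the $Z_k$.
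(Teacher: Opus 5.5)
Your proof is correct, but it invokes the optimality of reverse EDF at a different point than the paper does.

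\textbf{The paper's route.} After locating $l$ with $j^*\in H_l\setminus H_{l+1}$ (your $k$), the paper restarts at $d(j^*)$. It builds the residual pool $K'$: the jobs of $K$ untouched, and the other jobs of $H_l$ reduced by what reverse EDF spent on $[d(j^*),R_l]$. It observes that reverse EDF on $K'$ is exactly the continuation of the run on $H_l$. It then applies the time-reversed exchange of Claim~\ref{claim:EDF_works} to $K'$ to conclude that the run reaches $t'$.

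\textbf{Your route.} You concatenate two schedules: the covering of $[t',\delta]$ by $J$, and the reverse-EDF covering of $[\delta,R_k]$. This gives a covering of $[t',R_k]$ by jobs of $H_k$ with their original remaining processing times, which are untouched by earlier sweeps. That contradicts the maximality of $Z_k$ among intervals ending at $R_k$.

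\textbf{What each buys.} Your route needs the exchange lemma only in its basic form: a fresh job set, with the greedy started at its largest deadline. However, it rests on the aside, which the paper only asserts. You should say explicitly that the aside is the mirror image of Claim~\ref{claim:EDF_works}: latest arrival first, and multi-unit jobs cause no trouble since each slot holds one job. The paper's route proves what it needs inline, but it has to reason about a partially consumed pool.

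\textbf{Step~3 is not an obstacle.} Any job that reverse EDF places in a slot $[s,s+1]$ with $s\ge\delta$ was added to $A$ when the sweep passed its deadline. So its deadline is at least $s+1>\delta$, whereas every job of $J$ has deadline at most $\delta$. The two schedules therefore use disjoint jobs, and their union is automatically feasible. Neither the ``least useful'' exchange nor the Hall fallback is required.

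\textbf{Step~1 needs no shrinking.} The slot $[\delta-1,\delta]$ forces $\delta$ to already be a deadline of a job of $J$. Since $H_{i+1}=\emptyset$, every job of $H(t)$ lies in exactly one $H_k\setminus H_{k+1}$, and its deadline is in $[L_k,R_k]$. The case $\delta=L_k$ is handled identically.
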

\begin{proof}
    Recall that a batch starting at time $t'$ implies the
    existence of a set $K$ of huge jobs that can cover all timeslots of
    $[t',  d(j^*)]$, where $j^* = \argmax_{j \in K} d(j)$. Suppose (for a contradiction) that such a $K$ exists for $t' < L_i$.
    Let us look at $H_l$ and the interval $Z_l = [L_l,R_l]$  that LeftmostBatch($t$) produces with $j^* \in H_l \setminus H_{l+1}$. 
    From the invariant above we have that $K \subseteq H_l$. 
    Note that $t' < L_i \leq  L_l < d(j^*) \leq R_l = d(j_l)$, since $j^*$ is in $H_l$. 
    Consider the set $K'$ that includes $K$, with the original remaining processing times (as at the beginning of LeftmostBatch($t$),
    and jobs from $H_l \setminus K$ with remaining processing times where 
    we subtract from a jobs' remaining processing time the time
    it would be scheduled by the reverse EDF procedure in the interval $[d(j^*),d(j_l)]$.
    Now consider the run of reverse EDF on $K'$ - it would be the same as the run of reverse EDF on $H_l$ to the left of $d(j^*)$.
    Now modify the schedule of $K$ into timeslots $[t',d(j^*)]$ as in the proof of Claim  \ref{claim:EDF_works} (reverse order)
    to obtain a schedule that matches the reverse EDF schedule of $K'$ mentioned above.
    The result is that the reverse EDF schedule of $K'$ reaches $t'$, a contradiction.
\end{proof}

\subsection{ScheduleBatch($\hat{t}$)}
    LeftmostBatch($t$) already produces a batch that starts at $t$ if there is one: the last nonempty set $H_i$.
    Indeed, if  $j^* = \argmax_{j \in H_i} d(j)$, then  $Z_i = [t, d(j^*)]$ is an interval that is fully covered 
    by the reverse-EDF procedure above with jobs from $H_i$. We must make some modification to this reverse EDF procedure
    since we insist that all the jobs in $H_i$ are fulfilled.

    The modified reverse EDF procedures makes sure that, when scheduling a job $j$ with remaining processing time $rp(j)$
    in a segment with right end $R$, that $rp(j) \leq  R - \max(t,a(j))$.
    Precisely, 
    while running reverse EDF procedures from right to left segment by segment,
    when scheduling a job $j$ with remaining processing time $rp(j)$ in a segment with right end $R$,
    the scheduling of $j$ is kept unmodified if $rp(j) \leq  R - \max(t,a(j))$. 
    
    If this condition does not hold ($rp(j) > R - \max(t,a(j))$),  the modification is:
    job $j$ is scheduled in the interval $[\max(t, a(j)), \max(t, a(j)) + rp(j)]$.
    Since this condition did hold previously (to the right), 
    one can easily check that indeed  this interval
    is  included in $Z_i$ and does not overlap with timeslots where job $j$ is already scheduled. When we do so, we fulfill job $j$.
    Also, the modified reverse EDF procedure will not cover any less than the original reverse EDF procedure.
    This extra step creates at most two extra segments.
    Also, when running the modified EDF procedure, if we reach $t$ and some job $j$ is not fulfilled, we schedule (as above) job $j$
    in the interval $[\max(t, a(j)), \max(t, a(j)) + rp(j)]$. Again,
    since this condition did hold previously (to the right), 
    one can easily check that indeed  this interval
    is  included in $Z_i$ and does not overlap with timeslots where job $j$ is already scheduled. Again, when we do so, we fulfill job $j$.
    And again, this extra step creates at most two extra segments.

    The steps of the original EDF procedure also each introduce at most two new segments. Moreover, whenever a step is executed
    and a job $j$ is schedule in some interval $I$, then this happens at $d(j)$, or $j$ is fulfilled, so in total
    there are $O(n)$ such steps (over all the runs of ScheduleBatch) and so we do have that the number of segments is $O(n)$.